\documentclass[runningheads]{llncs}
\usepackage{graphicx}
\usepackage{amsmath}
\usepackage{amssymb}
\usepackage{ascmac}
\usepackage{mathbbol}
\usepackage[bb=boondox]{mathalfa}
\usepackage{bbold}
\usepackage{here}
\usepackage{comment}
\usepackage{enumerate}
\usepackage{color}

\usepackage{physics}
\usepackage[all]{xy}
\usepackage{tikz}
\usetikzlibrary{automata,positioning}

\newcommand{\bqed}{\hfill$\blacksquare$}

\bibliographystyle{plainurl}% the mandatory bibstyle
\begin{document}

\title{Positive Varieties of Lattice Languages}
\titlerunning{Positive Varieties of Lattice Languages}
%
%\titlerunning{Abbreviated paper title}
% If the paper title is too long for the running head, you can set
% an abbreviated paper title here
%
\author{Yusuke Inoue \and
Yuji Komatsu}
\authorrunning{Y. Inoue and Y. Komatsu}
% First names are abbreviated in the running head.
% If there are more than two authors, 'et al.' is used.
%
\institute{Nagoya University}
\maketitle              % typeset the header of the contribution

\begin{abstract}
While a language assigns a value of either `yes' or `no' to each word, a lattice language assigns an element of a given lattice to each word. 
An advantage of lattice languages is that joins and meets of languages can be defined as generalizations of unions and intersections. 
This fact also allows for the definition of positive varieties\textemdash classes closed under joins, meets, quotients, and inverse homomorphisms\textemdash of lattice languages. 
In this paper, we extend Pin's positive variety theorem, proving a one-to-one correspondence between positive varieties of regular lattice languages and pseudo-varieties of finite ordered monoids.
Additionally, we briefly explore algebraic approaches to finite-state Markov chains as an application of our framework.
\end{abstract}

\section{Introduction}
Multi-valued logic, which includes options such as `unknown' in addition to `yes' and `no', has been studied for a long time.
There have also been several studies on {\it multi-valued automata} and {\it multi-valued languages} as natural counterparts in formal language theory \cite{bruns2004model,kupferman2007lattice}. 
In multi-valued automata with a set of values $V$, each state is assigned a value (or color) in $V$ instead of `accept' or `reject'. 
Therefore, the language accepted by such an automaton is a mapping $L:\Sigma^*\to V$ rather than a subset $L:\Sigma^*\to \{{\rm yes, no}\}$.

As mentioned in \cite{bruns2004model,kupferman2007lattice}, the significant case is that of {\it lattice languages}, where the set of values $V$ forms a lattice $\Lambda=(V,\lor,\land)$. 
Note that the usual languages can be viewed as lattice languages over the Boolean lattice $\mathbb{B}=(\{0,1\},\lor,\land)$.
In addition to this, we can consider the lattice $\Lambda_{\rm ter}=\{0<1/2<1\}$ that includes $1/2$=`unknown', the lattice $\Lambda_{\rm fuz}=[0,1]$ corresponding to fuzzy logic, and so on (see \cite{bruns2004model} for more examples).
One reason lattice languages are significant is that we can define the join and meet operations for lattice languages, which are natural extensions of the union and intersection.
In particular, if $\Lambda$ is distributive, then lattice languages can be considered as weighted languages by regarding $\Lambda$ as a semiring \cite{droste2009handbook}.
Also, any multi-valued language defined on $V$ can be treated as a lattice language by replacing its values with the lattice of the powerset $(\mathcal{P}(V),\cup,\cap)$. 
Therefore, properties of lattice languages are worth investigating.

Monoids are very powerful tools for studying regular languages. 
In particular, the Eilenberg's variety theorem \cite{eilenberg1974automata}, which establishes a one-to-one correspondence between {\it varieties} of regular languages and {\it pseudo-varieties} of finite monoids, is one of the most outstanding results in such studies.
A variety of languages is a class of languages closed under Boolean combinations (unions, intersections, and complements), left and right quotients, and inverse homomorphisms.
Notably, Boolean combinations play important roles in the proof of the theorem.
Let us now consider whether the variety theorem can be extended to lattice languages.
As we mentioned in the previous paragraph, we can define joins and meets for lattice languages; however, in contrast, we cannot define an operation corresponding to the complement.
Thus, the concept of {\it positive variety} introduced by Pin \cite{pin1995variety} can be helpful for our problem. 
Positive varieties, which do not require closure under complements, serve as alternatives to varieties, and they correspond to pseudo-varieties of ordered monoids.
Following the idea of positive varieties, we can prove the variety theorem for lattice languages. 
However, due to the generalization to lattices, the proof is more complex and technical compared to the original proof in \cite{pin1995variety}.

As the main part of this paper, we prove a one-to-one correspondence between positive varieties of lattice languages and pseudo-varieties of ordered monoids.
The definitions and proofs follow the same approach as in \cite{pin1995variety}, but some technical adjustments are required.
First, in \cite{pin1995variety}, the concept of \emph{order ideals} is introduced as downward-closed subsets that behave like accepting states. 
However, when considering a lattice, subsets are not sufficient to characterize lattice languages.
Therefore, we need to extend order ideals to \emph{order-preserving colorings}.
This is a key idea of our framework.
Also, we require that the definition of a positive variety be closed under certain homomorphisms, which are called $\Lambda$-{\it morphisms}. 
As we will see in Section~3, without this modification, it is impossible to establish a one-to-one correspondence between positive varieties and pseudo-varieties. 

In this paper, in addition to proving the variety theorem, we briefly discuss its potential applications. 
Specifically, we claim that lattice languages and ordered monoids can contribute to the study of finite-state Markov chains. 
Historically, algebraic approaches to Markov chains have been explored such as random walks on groups \cite{saloff2004random} and semigroup-theoretic analysis of stationary distributions \cite{rhodes2019unified}. 
Building on this context, we outline how lattice languages can provide meaningful semantics for Markov chains and help describe specific classes of Markov chains.

%In the study of finite-state Markov chains, groups and monoids have been extensively used. However, in such studies, important algebraic tools, such as homomorphisms and varieties, have yet to be fully utilized. One reason for this is that in existing algebraic approaches, groups and monoids do not necessarily reflect the original semantics of Markov chains. In the second part of this paper, we demonstrate that lattice languages can contribute to solving these issues and, furthermore, that the pseudo-variety of ordered monoids can effectively explain certain classes of Markov chains. However, due to space limitations, this section will remain somewhat informal.

%This paper presents the following main contributions: (i) we extend Eilenberg's correspondence to lattice languages within the framework of positive varieties, and (ii) we demonstrate the connection between Markov chains and ordered monoids as a potential application of our result.

This paper is structured as follows. In Section~2, we introduce basic concepts such as ordered monoids, order-preserving colorings, and lattice languages.
In Section~3, we introduce syntactic ordered monoids and prove the variety theorem.
In addition, we explain the relationship with the usual (positive) varieties using some examples.
Finally, in Section~4, we explain how lattice languages and ordered monoids can provide useful algebraic approaches to Markov chains.
Some proofs have been moved to the appendix due to space limitations.

\section{Preliminaries}
A (partial) order is a relation $\le$ on a set that is reflexive, antisymmetric, and transitive.
For two ordered sets $(S_1,\le_1)$ and $(S_2,\le_2)$,
a mapping $\varphi:S_1\to S_2$ is {\it order-preserving} if $s\le_1 s'$ implies $\varphi(s)\le_2 \varphi(s')$ for each $s,s'\in S_1$.
Clearly, the composition of two order-preserving mappings is also order-preserving.

A {\it lattice} is an ordered set in which any two elements have a unique least upper bound and a unique greatest lower bound.
A lattice $\Lambda$ can be considered as an algebra with two binary operations the {\it join} $\lor$ and the {\it meet} $\land$ where $\lambda_1 \lor \lambda_2$ (resp. $\lambda_1\land \lambda_2$) is the least upper bound (resp. greatest lower bound) of $\lambda_1$ and $\lambda_2$ in $\Lambda$.
A lattice $\Lambda$ is \emph{complete} if every subset $I\subseteq \Lambda$ has both the least upper bound $\bigvee I$ and the greatest lower bound $\bigwedge I$.
For a complete lattice $\Lambda$, let $\mathbb{1}=\bigvee \Lambda$ and $\mathbb{0}=\bigwedge \Lambda$.
For instance, for any set $V$, the powerset $(\mathcal{P}(V),\subseteq)$ forms the complete lattice $(\mathcal{P}(V),\cup,\cap)$ with $\mathbb{1}=V$ and $\mathbb{0}=\emptyset$. 
In particular, the Boolean lattice $\mathbb{B}=(\{0,1\},\lor,\land)$ is isomorphic to the case that $V$ is a singleton set.
Note that every finite lattice is complete.
In this paper, we mainly consider the case that $\Lambda$ is complete.
Additionally, we assume that $\Lambda$ is non-trivial, i.e., $\mathbb{0}\ne\mathbb{1}$.

For a lattice $\Lambda$, an order-preserving mapping $\alpha:\Lambda\to \Lambda$ is called a $\Lambda$-{\it morphism}.
For example, for any element $\lambda\in \Lambda$, the constant mapping ${\rm cons}(\lambda):\Lambda\to \Lambda$ defined as ${\rm cons}(\lambda)(x)=\lambda$ for each $x\in \Lambda$ is a $\Lambda$-morphism.

\subsection{Ordered Monoids}
An {\it ordered monoid} is a pair $(M,\le)$ of a monoid $M$ and an order $\le$ on $M$ satisfying for every $u,v,m_1,m_2\in M$, $m_1\le m_2$ implies $um_1v\le um_2v$.
We simply write an ordered monoid $M$ instead of $(M,\le)$ when the order $\le$ is clear from the context.
A {\it morphism of ordered monoids} is an order-preserving monoid morphism.
A morphism $\varphi:(M_1,\le_1)\to (M_2,\le_2)$ is an {\it isomorphism} if there is a morphism $\psi:(M_2,\le_2)\to (M_1,\le_1)$ such that both of $\varphi\circ \psi$ and $ \psi\circ \varphi$ are identity.
As is mentioned in \cite{pin1995variety}, a bijective morphism is not necessarily an isomorphism.
\begin{comment}
\begin{example}\label{ex-ordered_monoids}
Let $R_2=\{1,r_1,r_2\}$ be the right-zero monoid, that is, $r_1\cdot r_2=1$ if $r_1=r_2=1$, and $r_1\cdot r_2=r_2$ otherwise.
Then, $R_2$ with the order $\le$ defined as $\le=\{(1,1),(r_1,r_1),(r_2,r_2),(1,r_1),(1,r_2)\}$ forms an ordered monoid.
In general, the right-zero monoid $R_n=\{1,r_1,\ldots,r_n\}$ with $n$ right-zero elements $r_1,\ldots,r_n$ forms an ordered monoid with $\le=\{(1,1)\}\cup\{(r_i,r_i)\}_i\cup\{(1,r_i)\}_i$.\bqed
\end{example}
\end{comment}

An ordered monoid $(M_1,\le_1)$ is a \emph{submonoid} of $(M_2,\le_2)$ if $M_1$ is an (unordered) submonoid of $M_2$, and $\le_1$ is the restriction of $\le_2$ to $M_1$.
%Note that submonoids are not corresponding to injective morphisms of ordered monoids.
An ordered monoid $(M_1,\le_1)$ is a \emph{quotient} of $(M_2,\le_2)$ if there is a surjective morphism $\varphi:(M_2,\le_2)\to (M_1,\le_1)$.
A quotient of a submonoid is also called a {\it divisor}.
For a family of ordered monoids $\{(M_i,\le_i)\}_i$, the {\it direct product} $\prod_{i}(M_i,\le_i)$ is the ordered monoid on the product of monoids $\prod_i M_i$ with the order $\le$ defined as $(u_i)_i \le (v_i)_i \iff \forall i.(u_i \le_i v_i)$.

Let $\Sigma$ be a finite alphabet, and let $\Sigma^*$ be the free monoid on $\Sigma$. 
The following properties of the free (ordered) monoid are shown in \cite{pin1995variety}.
\begin{proposition}[Proposition 2.5, \cite{pin1995variety}]\label{prop-univ_free_1}
For any mapping $\varphi:\Sigma\to M$ with an ordered monoid $(M,\le)$, there is a unique morphism $\overline{\varphi}:(\Sigma^*,=)\to (M,\le)$ such that $\varphi(a)=\overline{\varphi}(a)$ for each $a\in \Sigma$.\bqed
\end{proposition}
\begin{proposition}[Corollary 2.6, \cite{pin1995variety}]\label{prop-univ_free_2}
Let $\eta:(\Sigma^*,=)\to (M_1,\le_1)$ be a morphism and $\varphi:(M_2,\le_2)\to (M_1,\le_1)$ be a surjective morphism. There is a morphism $\psi:(\Sigma^*,=)\to (M_2,\le_2)$ such that $\eta=\phi\circ \psi$.\bqed
\end{proposition}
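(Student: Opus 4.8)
The statement to prove is Proposition~\ref{prop-univ_free_2}: given $\eta:(\Sigma^*,=)\to(M_1,\le_1)$ and a surjective morphism $\varphi:(M_2,\le_2)\to(M_1,\le_1)$, find $\psi:(\Sigma^*,=)\to(M_2,\le_2)$ with $\eta=\varphi\circ\psi$.

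This is a "lifting" statement. The plan is to build $\psi$ by first defining it on the generators $\Sigma$ and then invoking the universal property of the free monoid (Proposition~\ref{prop-univ_free_1}). Concretely, for each letter $a\in\Sigma$, the element $\eta(a)\in M_1$ has, by surjectivity of $\varphi$, at least one preimage in $M_2$; using the axiom of choice, select one such preimage and call it $\psi_0(a)$, so that $\varphi(\psi_0(a))=\eta(a)$. This gives a set map $\psi_0:\Sigma\to M_2$. Since the domain $(\Sigma^*,=)$ carries the equality order, Proposition~\ref{prop-univ_free_1} applies verbatim: there is a unique morphism of ordered monoids $\psi:(\Sigma^*,=)\to(M_2,\le_2)$ extending $\psi_0$ — note that order-preservation is automatic here because the only comparabilities in $(\Sigma^*,=)$ are $w\le w$.

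It then remains to check $\eta=\varphi\circ\psi$. Both sides are monoid morphisms from $\Sigma^*$; since $\varphi$ is a morphism of ordered monoids (in particular a monoid morphism) and $\psi$ is too, the composite $\varphi\circ\psi$ is a monoid morphism. By Proposition~\ref{prop-univ_free_1} again — applied now to the target $(M_1,\le_1)$ — a morphism out of the free monoid is determined by its restriction to $\Sigma$. So it suffices to observe $(\varphi\circ\psi)(a)=\varphi(\psi_0(a))=\eta(a)$ for every $a\in\Sigma$, whence $\varphi\circ\psi$ and $\eta$ agree on generators and therefore coincide.

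There is essentially no serious obstacle; the only point requiring a little care is the appeal to choice when picking preimages (harmless) and being careful that the uniqueness clause of Proposition~\ref{prop-univ_free_1} is being invoked in two different places — once to construct $\psi$ from $\psi_0$, and once to conclude that two morphisms agreeing on $\Sigma$ are equal. One should also note explicitly that $\varphi\circ\psi$ really is a \emph{morphism of ordered monoids} (order-preservation follows since a composite of order-preserving maps is order-preserving, as remarked in the Preliminaries), so that the uniqueness statement genuinely applies to it.
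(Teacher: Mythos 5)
Your proof is correct and is exactly the standard argument for this lifting property (the paper itself cites it from Pin without proof): choose preimages of $\eta(a)$ under the surjection $\varphi$ for each generator $a\in\Sigma$, extend via the universal property of Proposition~\ref{prop-univ_free_1}, and use the uniqueness clause to conclude $\varphi\circ\psi=\eta$ from agreement on $\Sigma$. No gaps.
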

\begin{comment}
    \begin{equation*}
  \xymatrix{
  {(\Sigma^*,=)}\ar@{..>}[r]^-{ \exists \psi}\ar@{->}[rd]_-{\eta}
  &{(M_2,\le_2)}\ar@{->}[d]^-{\varphi}\\
  &{(M_1,\le_1)} ~.\!\!
  }
\end{equation*}
commutes.
\end{comment}

\subsection{Order-Preserving Colorings}
Let $\Lambda$ be a complete lattice. 
For an ordered monoid $(M,\le)$, an order-preserving mapping $P:M\to \Lambda$ is called an {\it order-preserving coloring} (or {\it op-coloring}) on $(M,\le)$.

\begin{remark}\label{rem-ideal}
A subset $I$ of an ordered monoid $(M,\le)$ is called an {\it order ideal} in \cite{pin1995variety} if $x\le y$ and $y\in I$ implies $x\in I$ for all $x,y\in M$. It is clear that $I\subseteq M$ is an order ideal iff $I=P^{-1}(0)$ with an op-coloring $P:M\to \mathbb{B}$. In this sense, the concept of op-colorings is a natural extension of order ideals.\bqed
\end{remark}

For a family $\{P_i\}_i:M\to \Lambda$ of op-colorings on an ordered monoid $(M,\le)$, define the {\it join} $\bigvee_{i}P_i:M\to \Lambda$ and the {\it meet} $\bigwedge_{i}P_i:M\to \Lambda$ as the pointwise extensions, that is,\[\textstyle
(\bigvee_{i}P_i)(m)=\bigvee_{i}(P_i(m)),~~(\bigwedge_{i}P_i)(m)=\bigwedge_{i}(P_i(m))
\]
for each $m\in M$.
We can easily show that both of $\bigvee_{i}P_i$ and $\bigwedge_{i}P_i$ are order-preserving.
In addition, let $\{(M_i,\le_i)\}_i$ be a family of ordered monoids, and let $P_i:M_i\to \Lambda$ be op-colorings for each $i$.
Define the {\it product join} $\bigoplus_i P_i$ and the {\it product meet} $\bigotimes_i P_i$ as the mappings from $\prod_i(M_i,\le_i)$ to $\Lambda$ such that \[\textstyle
    (\bigoplus_i P_i)(m)= \bigvee_i P_i(m_i),~~
    (\bigotimes_i P_i)(m)= \bigwedge_i P_i(m_i)
\]
for each $m=(m_i)_i\in \prod_i(M_i,\le_i)$.
\begin{proposition}
Let $\{(M_i,\le_i)\}_i$ be a family of ordered monoids, and let $P_i:M_i\to \Lambda$ be op-colorings.
Then, both of $\bigoplus_i P_i$ and $\bigotimes_i P_i$ are op-colorings.
\end{proposition}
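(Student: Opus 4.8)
The plan is to verify directly that each of the two maps is order-preserving, since that is the only content of being an op-coloring (the codomain is the fixed complete lattice $\Lambda$, and no monoid structure on $\Lambda$ is involved). Fix the direct product $\prod_i(M_i,\le_i)$ with its componentwise order: $m=(m_i)_i \le m'=(m'_i)_i$ means $m_i \le_i m'_i$ for every index $i$. Take two such elements with $m\le m'$; I want to show $(\bigoplus_i P_i)(m)\le (\bigoplus_i P_i)(m')$ and similarly for $\bigotimes$.

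First I would handle the product join. Since each $P_i$ is order-preserving and $m_i\le_i m'_i$, we get $P_i(m_i)\le P_i(m'_i)$ in $\Lambda$ for every $i$. Hence $P_i(m_i)\le P_i(m'_i)\le \bigvee_j P_j(m'_j)$ for each fixed $i$, so $\bigvee_j P_j(m'_j)$ is an upper bound of the family $\{P_i(m_i)\}_i$; by the definition of least upper bound in the complete lattice $\Lambda$, $\bigvee_i P_i(m_i)\le \bigvee_j P_j(m'_j)$, which is exactly $(\bigoplus_i P_i)(m)\le(\bigoplus_i P_i)(m')$. The argument for the product meet is dual: from $P_i(m_i)\le P_i(m'_i)$ we get $\bigwedge_j P_j(m_j)\le P_i(m_i)\le P_i(m'_i)$ for every $i$, so $\bigwedge_j P_j(m_j)$ is a lower bound of $\{P_i(m'_i)\}_i$, whence $\bigwedge_i P_i(m_i)\le \bigwedge_i P_i(m'_i)$ by the definition of greatest lower bound. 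Completeness of $\Lambda$ is what guarantees these joins and meets exist even when the index family is infinite, so it should be invoked explicitly.

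There is no real obstacle here — the statement is a routine ``monotonicity is preserved under pointwise lattice operations'' fact, and the only thing to be a little careful about is that the relevant joins/meets are taken over the full index family rather than pairwise, so one should cite completeness of $\Lambda$ rather than just the lattice axioms. One could also remark that this is the ``external'' analogue of the already-noted fact that the pointwise $\bigvee_i P_i$ and $\bigwedge_i P_i$ of op-colorings on a single ordered monoid are op-colorings; indeed $\bigoplus_i P_i = \bigvee_i (P_i\circ \pi_i)$ and $\bigotimes_i P_i = \bigwedge_i (P_i\circ \pi_i)$ where $\pi_i:\prod_j(M_j,\le_j)\to (M_i,\le_i)$ is the $i$-th projection, which is a morphism of ordered monoids and in particular order-preserving; composing order-preserving maps and then taking pointwise joins/meets stays order-preserving. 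Either the direct $\epsilon$-free verification or this factorization through the projections yields the result, and I would present the short direct verification as the proof.
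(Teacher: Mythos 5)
Your proof is correct and follows essentially the same route as the paper: unpack the componentwise order on the product, apply order-preservation of each $P_i$, and conclude that the pointwise join and meet respect the inequality. The only difference is that you spell out the upper-bound/lower-bound argument for why $\bigvee_i P_i(m_i)\le\bigvee_i P_i(m_i')$ (and dually), a step the paper simply asserts; your added remark about factoring through the projections is a nice observation but not needed.
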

\begin{proof}
Let $m=(m_i)_i$ and $m'=(m_i')_i$ be elements of $\prod_i(M_i,\le_i)$ such that $m\le m'$.
By the definition of the product of ordered monoids, $m_i\le_i m_i'$ for each $i$.
Because $P_i$ is order preserving for each $i$, $P_i(m_i)\le_i P_i(m_i')$ also holds.
Then, $(\bigoplus_i P_i)(m)=\bigvee_i P_i(m_i)\le \bigvee_i P_i(m_i')=(\bigoplus_i P_i)(m')$ and $(\bigotimes_i P_i)(m)=\bigwedge_i P_i(m_i)\le \bigwedge_i P_i(m_i')=(\bigotimes_i P_i)(m')$ hold.
This implies that $\bigoplus_i P_i$ and $\bigotimes_i P_i$ are order-preserving.\qed
\end{proof}

Let $P$ be an op-coloring on $(M,\le)$. For an element $u\in M$, the {\it left quotient} $u\backslash P:M\to \Lambda$ and the {\it right quotient} $P/u:M\to \Lambda$ with $u$ are defined as $u\backslash P(x)=P(ux)$ and $P/u(x)=P(xu)$ for each $x\in M$.
\begin{proposition}\label{prop-quot_is_op}
    For an op-coloring $P$ on $(M,\le)$ and an element $u\in M$, both of $u\backslash P$ and $P/u$ are op-colorings.
\end{proposition}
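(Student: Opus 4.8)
The plan is to verify directly that the map defining each quotient is order-preserving, relying on nothing more than the compatibility of the order on $M$ with multiplication (the defining axiom of an ordered monoid) together with the order-preservation of $P$ itself. Since an op-coloring is by definition just an order-preserving map into $\Lambda$, once monotonicity is checked there is nothing more to do.

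First I would fix $u\in M$ and take arbitrary $x,y\in M$ with $x\le y$. For the left quotient, I apply the ordered-monoid axiom with $u$ on the left and the unit $1$ on the right: $ux = u\,x\,1 \le u\,y\,1 = uy$. Because $P$ is an op-coloring, hence order-preserving, this gives $P(ux)\le P(uy)$, which is exactly $(u\backslash P)(x)\le (u\backslash P)(y)$. Therefore $u\backslash P$ is order-preserving, and being a mapping $M\to\Lambda$ it is an op-coloring. The right quotient is handled symmetrically: from $x\le y$, the same axiom with the unit on the left and $u$ on the right yields $xu\le yu$, whence $P(xu)\le P(yu)$, i.e. $(P/u)(x)\le (P/u)(y)$.

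There is no genuine obstacle here; the statement is an immediate consequence of the two monotonicity hypotheses built into the definitions. The only point that requires a moment of care is to invoke the ordered-monoid compatibility $m_1\le m_2 \Rightarrow vm_1w\le vm_2w$ with the right choice of flanking elements — one of $v,w$ being the identity — rather than, say, trying to appeal to monotonicity of multiplication in only one argument, which has not been stated separately.
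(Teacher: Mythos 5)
Your proposal is correct and matches the paper's own proof: both argue that $x\le y$ gives $ux\le uy$ (resp.\ $xu\le yu$) by compatibility of the order with multiplication, and then apply order-preservation of $P$. The extra care you take in instantiating the axiom $m_1\le m_2 \Rightarrow vm_1w\le vm_2w$ with the identity on one side is a fine point the paper leaves implicit, but the argument is the same.
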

\begin{proof}
Let $x\le y\in M$. Because the order $\le$ is compatible with the operation of $M$, $ux\le uy$ for any $u\in M$.
Then, $u\backslash P(x)=P(ux)\le P(uy)=u\backslash P(y)$, that is, $u\backslash P$ preserves the order. The same holds for the right quotient.\qed
\end{proof}
\begin{proposition}\label{prop-quot_and_morphism}
    Let $\eta:M_1\to M_2$ be a morphism with ordered monoids $M_1$ and $M_2$.
    Also, let $P$ be an op-coloring on $M_2$.
    For each element $u\in M_1$, $u\backslash (P\circ \eta)= (\eta(u)\backslash P)\circ \eta$ holds.
\end{proposition}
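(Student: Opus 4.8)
The plan is to verify the stated identity pointwise. Since both $u\backslash(P\circ\eta)$ and $(\eta(u)\backslash P)\circ\eta$ are mappings from $M_1$ to $\Lambda$, it suffices to show that they agree on every $x\in M_1$. First I would expand the left-hand side using the definitions of the left quotient and of composition: $\big(u\backslash(P\circ\eta)\big)(x) = (P\circ\eta)(ux) = P(\eta(ux))$. Because $\eta$ is a monoid morphism we have $\eta(ux)=\eta(u)\,\eta(x)$, so this equals $P\big(\eta(u)\,\eta(x)\big)$.

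Next I would expand the right-hand side the same way: $\big((\eta(u)\backslash P)\circ\eta\big)(x) = (\eta(u)\backslash P)(\eta(x)) = P\big(\eta(u)\,\eta(x)\big)$, directly from the definition of the left quotient on $M_2$ by the element $\eta(u)\in M_2$. Comparing the two computations, both sides equal $P\big(\eta(u)\,\eta(x)\big)$ for every $x\in M_1$, which yields the claimed equality of functions. The symmetric statement for right quotients, namely $(P\circ\eta)/u = (P/\eta(u))\circ\eta$, follows by the same argument with $xu$ in place of $ux$.

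There is essentially no obstacle here: the proof is a one-line unwinding of definitions whose only nontrivial ingredient is the multiplicativity of $\eta$. It is perhaps worth noting, for later use, that both sides are indeed op-colorings — the left side by Proposition~\ref{prop-quot_is_op} applied on $M_1$, and the right side because $\eta(u)\backslash P$ is an op-coloring on $M_2$ (again Proposition~\ref{prop-quot_is_op}) and the composition of order-preserving maps is order-preserving — but this observation is not needed to establish the displayed identity itself.
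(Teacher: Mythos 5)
Your proof is correct and is essentially identical to the paper's: both verify the identity pointwise by expanding the definitions of the quotient and composition and using the multiplicativity of $\eta$ to get $P(\eta(u)\eta(x))$ on both sides. Nothing to add.
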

\begin{proof}
    For each $x\in M_1$, we have $u\backslash (P\circ \eta)(x)=(P\circ \eta )(ux)=P(\eta (u)\eta(x))=(\eta(u)\backslash P)( \eta(x))=((\eta(u)\backslash P)\circ \eta)(x)$. (In fact, this proposition holds regardless the order.) \qed
\end{proof}

Let $P$ be an op-coloring on $(M_2,\le_2)$, and let $h:(M_1,\le_1)\to (M_2,\le_2)$ be a morphism of ordered monoids.
We say that $P\circ h$ is the {\it inverse homomorphism} with $h$.
This is because if we regard the condition $P(x)=\mathbb{0}$ as ``$x\in M_2$ is in the subset $\hat P\subseteq M_2$''\footnote{Following \cite{pin1995variety}, we interpret $\mathbb{1}$ as `rejected' and $\mathbb{0}$ as `accepted'. However, by considering the dual of each lattice, this interpretation can be reversed without affecting the substance of the argument.}, then,\[
Ph(x)=\mathbb{0} ~\iff~ h(x)\in \hat P ~\iff ~ x\in h^{-1}(\hat P)
\] holds for each $x\in M_1$. 
It is clear that $Ph$ is order preserving.

\subsection{Lattice Languages}
Let $\Sigma$ be a finite alphabet, and let $\Lambda$ be a complete lattice. 
A $\Lambda$-{\it language} over $\Sigma$ is a mapping $L:\Sigma^*\to \Lambda$.
Note that a $\Lambda$-language is also an op-coloring $L:(\Sigma^*,=)\to \Lambda$.
Therefore, joins, meets, left and right quotients, and inverse homomorphisms are defined in the same way as the previous subsection.
%In addition, for $\Lambda$-languages $L_1,\ldots,L_k$ over possibly different alphabets $\Sigma_1,\ldots,\Sigma_k$, define the join $\bigvee_{i}L_i:(\Sigma_1\cup\cdots\cup \Sigma_k)^*\to \Lambda$ and the meet $\bigwedge_{i}L_i:(\Sigma_1\cup\cdots\cup \Sigma_k)^*\to \Lambda$ by letting $L_i(w)=\mathbb{1}$ if $w\notin (\Sigma_i)^*$ for each $1\le i\le k$.

A $\Lambda$-language $L$ is {\it recognized} by an ordered monoid $(M,\le)$ if there is a morphism $\eta:(\Sigma^*,=)\to (M,\le)$ and an op-coloring $P:(M,\le)\to \Lambda$ such that the diagram
\begin{equation*}
  \xymatrix{
  {(\Sigma^*,=)}\ar@{..>}[r]^-{ \exists\eta}\ar@{->}[d]_-{L}
  &{(M,\le)}\ar@{..>}[ld]^-{\exists P}\\
  {\Lambda}&
  }
\end{equation*}
commutes, that is, $L=P\circ \eta$. In this case, we also say that $L$ is recognized by the triple $(\eta,(M,\le),P)$.
This definition is an extension of the definition of recognition in \cite{pin1995variety} (see also Remark~\ref{rem-ideal}).
A $\Lambda$-language $L$ is {\it regular} if it is recognized by a finite ordered monoid.
The following are basic properties of recognition.

\begin{theorem}\label{th-basic_closure_properties}
Let $\Lambda$ be a complete lattice, and let $L_1$ and $L_2$ be lattice languages over $\Sigma$ recognized by $(M_1,\le_1)$ and $(M_2,\le_2)$, respectively. Then,
\begin{enumerate}[(i)]
    \item the join $L_1\lor L_2$ and the meet $L_1\land L_2$ are recognized by ${(M_1,\le_1)}\times (M_2,\le_2)$,
    \item for a word $u\in \Sigma^*$, the quotients $u\backslash L_1$ and $L_1/u$ are recognized by $(M_1,\le_1)$,
    \item for a language homomorphism $h:(\Sigma')^*\to \Sigma^*$, the inverse homomorphism $L_1h$ is recognized by $(M_1,\le_1)$, and
    \item for a $\Lambda$-morphism $\alpha:\Lambda\to \Lambda$, $\alpha L_1$ is recognized by $(M_1,\le_1)$.
\end{enumerate}
\end{theorem}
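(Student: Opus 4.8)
The plan is to handle all four items in the same way: fix recognizing triples $(\eta_1,(M_1,\le_1),P_1)$ for $L_1$ and $(\eta_2,(M_2,\le_2),P_2)$ for $L_2$, so that $L_1=P_1\circ\eta_1$ and $L_2=P_2\circ\eta_2$, and for each closure operation produce a morphism out of $(\Sigma^*,=)$ together with an op-coloring whose composite is the target $\Lambda$-language. The bulk of the work is already done by the propositions of Section~2 (quotients of op-colorings are op-colorings, product joins and meets of op-colorings are op-colorings, and compositions of order-preserving maps are order-preserving), so what remains is to assemble these facts correctly.

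For (i), the map $a\mapsto(\eta_1(a),\eta_2(a))$ from $\Sigma$ to $M_1\times M_2$ extends, by Proposition~\ref{prop-univ_free_1}, to a morphism $\eta\colon(\Sigma^*,=)\to(M_1,\le_1)\times(M_2,\le_2)$ with $\eta(w)=(\eta_1(w),\eta_2(w))$; here order-preservation is vacuous since the order on $\Sigma^*$ is equality. The product join $P_1\oplus P_2$ and product meet $P_1\otimes P_2$ are op-colorings by the proposition following their definition, and a direct computation gives $(P_1\oplus P_2)(\eta(w))=P_1(\eta_1(w))\vee P_2(\eta_2(w))=L_1(w)\vee L_2(w)$, and dually $(P_1\otimes P_2)(\eta(w))=(L_1\wedge L_2)(w)$. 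Hence $L_1\vee L_2$ and $L_1\wedge L_2$ are recognized by $(M_1,\le_1)\times(M_2,\le_2)$.

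For (ii), apply Proposition~\ref{prop-quot_and_morphism} with $\eta=\eta_1$ and $u\in\Sigma^*$: $u\backslash L_1=u\backslash(P_1\circ\eta_1)=(\eta_1(u)\backslash P_1)\circ\eta_1$, where $\eta_1(u)\backslash P_1$ is an op-coloring on $(M_1,\le_1)$ by Proposition~\ref{prop-quot_is_op}; the right quotient is symmetric via $L_1/u=(P_1/\eta_1(u))\circ\eta_1$. For (iii), a language homomorphism $h\colon(\Sigma')^*\to\Sigma^*$ is in particular a morphism $((\Sigma')^*,=)\to(\Sigma^*,=)$, so $\eta_1\circ h$ is a morphism of ordered monoids and $L_1 h=P_1\circ(\eta_1\circ h)$ is recognized by $(\eta_1\circ h,(M_1,\le_1),P_1)$. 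For (iv), if $\alpha\colon\Lambda\to\Lambda$ is a $\Lambda$-morphism then $\alpha\circ P_1\colon M_1\to\Lambda$ is a composite of order-preserving maps, hence an op-coloring, and $\alpha L_1=(\alpha\circ P_1)\circ\eta_1$ is recognized by $(\eta_1,(M_1,\le_1),\alpha\circ P_1)$.

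There is no genuine obstacle; the statement is a repackaging of the Section~2 lemmas. The one place that needs a moment's care is (i): one must invoke the universal property to see that the diagonal-type map into $M_1\times M_2$ really is a monoid morphism, and then check that precomposing the product join (resp. meet) with it yields exactly the pointwise join (resp. meet) of $L_1$ and $L_2$. Once that is settled, items (ii)--(iv) follow immediately from Propositions~\ref{prop-quot_is_op} and~\ref{prop-quot_and_morphism} and the closure of order-preserving maps under composition.
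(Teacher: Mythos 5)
Your proof is correct and follows essentially the same route as the paper's: the same product morphism $\langle\eta_1,\eta_2\rangle$ with $P_1\oplus P_2$ and $P_1\otimes P_2$ for (i), Proposition~\ref{prop-quot_and_morphism} for (ii), and pre-/post-composition with $h$ and $\alpha$ for (iii) and (iv), which the paper merely handles in one combined diagram rather than separately.
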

\begin{proof}
{\bf (i)} Assume that $L_1$ and $L_2$ are recognized by $(\eta_1,(M_1,\le_1),P_1)$ and $(\eta_2,(M_2,\le_2),P_2)$, respectively.
Let $\langle \eta_1,\eta_2\rangle:\Sigma^*\to (M_1,\le_1)\times (M_2,\le_2)$ be the homomorphism defined as $\langle \eta_1,\eta_2\rangle(w)=(\eta_1(w),\eta_2(w))$ for each $w\in \Sigma^*$.
Then, we can easily show that $L_1\lor L_2$ is recognized by $(\langle \eta_1,\eta_2\rangle,(M_1,\le_1)\times {(M_2,\le_2)},P_1\oplus P_2)$, and $L_1\land L_2$ is recognized by $(\langle \eta_1,\eta_2\rangle,{(M_1,\le_1)}\times {(M_2,\le_2)},P_1\otimes P_2)$.

{\bf (ii)} $u\backslash L_1$ is recognized by $(\eta_1,(M_1,\le_1),\eta_1(u)\backslash P_1)$ because
$u\backslash L_1=u\backslash (P_1\circ \eta_1)=(\eta_1(u)\backslash P_1)\circ \eta_1$ by Proposition~\ref{prop-quot_and_morphism}. The same holds for $L_1/u$.

{\bf (iii and iv)} The equation $\alpha \circ L_1 \circ h= (\alpha P_1)\circ (\eta_1 h) $ holds as in the following commutative diagram:
\begin{equation*}
  \xymatrix{
  {((\Sigma')^*,=)}\ar@{->}[r]^-{ h}
  &{(\Sigma^*,=)}\ar@{->}[r]^-{ \eta_1}\ar@{->}[d]_-{L_1}
  &{(M_1,\le_1)}\ar@{->}[ld]^-{P_1}\\
  &{\Lambda}\ar@{->}[r]_-{ \alpha}&{\Lambda}.
  }
\end{equation*}
Therefore, 
$\alpha L_1  h$ is recognized by $( \eta_1 \circ h,(M_1,\le_1),\alpha\circ P_1)$.\qed
\end{proof}

Next, let us discuss recognition by divisors. The case of quotients is almost the same as in \cite{pin1995variety}, but the case of submonoids requires some additional discussion.
\begin{theorem}\label{th-basic_recognition}
    Let $(M_1,\le_1)$ and $(M_2,\le_2)$ be ordered monoids. 
    For a complete lattice $\Lambda$, let $L$ be a $\Lambda$-language over $\Sigma$ recognized by $(M_1,\le_1)$. Then,
    \begin{enumerate}[(i)]
    \item if $(M_1,\le_1)$ is a quotient of $(M_2,\le_2)$, then $(M_2,\le_2)$ also recognizes $L$,
    \item if $(M_1,\le_1)$ is a submonoid of $(M_2,\le_2)$, then $(M_2,\le_2)$ also recognizes $L$, and
    \item if $(M_1,\le_1)$ is a divisor of $(M_2,\le_2)$, then $(M_2,\le_2)$ also recognizes $L$.
\end{enumerate}
\end{theorem}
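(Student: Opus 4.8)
The plan is to establish (i) and (ii) separately and then obtain (iii) for free, since a divisor is by definition a quotient of a submonoid. Throughout, assume $L$ is recognized by the triple $(\eta_1,(M_1,\le_1),P_1)$, so that $L=P_1\circ\eta_1$.

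For (i), let $\varphi:(M_2,\le_2)\to(M_1,\le_1)$ be the surjective morphism witnessing that $(M_1,\le_1)$ is a quotient of $(M_2,\le_2)$. Applying Proposition~\ref{prop-univ_free_2} to $\eta_1$ and $\varphi$ yields a morphism $\psi:(\Sigma^*,=)\to(M_2,\le_2)$ with $\eta_1=\varphi\circ\psi$. Since $P_1$ and $\varphi$ are order-preserving, so is $P_1\circ\varphi:(M_2,\le_2)\to\Lambda$, i.e.\ it is an op-coloring; and $L=P_1\circ\eta_1=(P_1\circ\varphi)\circ\psi$. Hence $(\psi,(M_2,\le_2),P_1\circ\varphi)$ recognizes $L$.

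For (ii), let $\iota:(M_1,\le_1)\hookrightarrow(M_2,\le_2)$ be the inclusion; it is a morphism of ordered monoids because $M_1$ is an (unordered) submonoid of $M_2$ and $\le_1$ is the restriction of $\le_2$. Put $\eta_2=\iota\circ\eta_1$. It then suffices to produce an op-coloring $P_2:(M_2,\le_2)\to\Lambda$ with $P_2|_{M_1}=P_1$, for then $P_2\circ\eta_2=(P_2|_{M_1})\circ\eta_1=L$. This is the only step where completeness of $\Lambda$ is genuinely used: I would set
\[
P_2(m)=\bigwedge\{\,P_1(x)\mid x\in M_1,\ m\le_2 x\,\}
\]
for each $m\in M_2$, with the convention $\bigwedge\emptyset=\mathbb{1}$. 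Monotonicity is immediate, since if $m\le_2 m'$ then the indexing set of the meet for $m'$ is contained in that for $m$, so $P_2(m)\le P_2(m')$. To see that $P_2$ restricts to $P_1$ on $M_1$, note that for $m\in M_1$ the condition ``$m\le_2 x$ with $x\in M_1$'' is the same as ``$m\le_1 x$''; order-preservation of $P_1$ makes $P_1(m)$ a lower bound of the meet, and taking $x=m$ shows it is attained, so $P_2(m)=P_1(m)$. (In the Boolean case this is exactly the downward closure of an order ideal used in \cite{pin1995variety}; cf.\ Remark~\ref{rem-ideal}. Dually, one could instead use the join $\bigvee\{P_1(x)\mid x\in M_1,\ x\le_2 m\}$ with $\bigvee\emptyset=\mathbb{0}$.) I expect this extension to be the main technical point: one must check that the defining meet (or join) does not overshoot $P_1$ on $M_1$, which is precisely where both the order-compatibility of the embedding and the monotonicity of $P_1$ are needed.

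For (iii), a divisor of $(M_2,\le_2)$ is a quotient of some submonoid $(N,\le_N)$ of $(M_2,\le_2)$. Since $L$ is recognized by $(M_1,\le_1)$ and $(M_1,\le_1)$ is a quotient of $(N,\le_N)$, part (i) gives that $(N,\le_N)$ recognizes $L$; since $(N,\le_N)$ is a submonoid of $(M_2,\le_2)$, part (ii) then gives that $(M_2,\le_2)$ recognizes $L$, as required.
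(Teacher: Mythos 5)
Your proof is correct and follows essentially the same route as the paper: part (i) via Proposition~\ref{prop-univ_free_2}, part (ii) by extending the op-coloring over the whole of $M_2$ using completeness of $\Lambda$, and part (iii) by composing (i) and (ii). The only (immaterial) difference is in (ii), where the paper extends by the join over the down-set, $P'(m)=\bigvee\{P_1(x)\mid x\in M_1,\ x\le_2 m\}$, rather than your meet over the up-set; both verifications are dual to one another and you note this alternative yourself.
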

\begin{proof}
   Assume that $L$ is recognized by $(\eta,(M_1,\le_1),P)$ with a morphism $\eta:(\Sigma^*,=)\to (M_1,\le_1)$ and an op-coloring $P:M_1\to \Lambda$.
    {\bf (i)} 
    If $(M_1,\le_1)$ is a quotient of $(M_2,\le_2)$, there is a surjective morphism $\varphi:(M_2,\le_2)\to(M_1,\le_1)$. 
    Then, by Proposition~\ref{prop-univ_free_2}, there is a morphism $\psi:(\Sigma^*,=)\to (M_2,\le_2)$ such that $\eta=\phi\circ \psi$.
    Thus, we have that $L$ is recognized by $(\psi,(M_2,\le_2),P\circ \phi)$.

    {\bf (ii)} If $(M_1,\le_1)$ is a submonoid of $(M_2,\le_2)$, then there is an embedding morphism $\phi:(M_1,\le_1)\to (M_2,\le_2)$ such that $\phi(m)=m$ for each $m\in M_1$.
    Define $P':M_2\to \Lambda $ as \[
    P'(m)=\bigvee\{P(x)\mid x\in M_1, x\le_2 m\}.
    \]
    Then, $P'$ is order-preserving because if $m\le_2m'$ holds with some elements $m,m'\in M_2$, we have that $\{P(x)\mid x\in M_1, x\le_2 m\}\subseteq \{P(x)\mid x\in M_1, x\le_2 m'\}$. Therefore,\[
    P'(m)=\bigvee\{P(x)\mid x\in M_1, x\le_2 m\}\le\bigvee\{P(x)\mid x\in M_1, x\le_2 m'\}=P'(m')
    \]
    holds. 
    Also, because $P'(m)=P(m)$ holds if $m\in M_1$, we obtain $P'\circ \phi=P$.
    Consequently, $L=P'\circ \phi\circ \eta$ is recognized by $(\phi\circ \eta,(M_2,\le_2),P')$.
    
    {\bf (iii)} The statement follows from {\bf (i)} and {\bf (ii)}.\qed
\end{proof}

\section{Eilenberg's Correspondence for Lattice Languages}
For a complete lattice $\Lambda$, we say that a class ${\bf L}$ of $\Lambda$-languages is closed under $\Lambda$-morphisms if for any language $L\in \bf L$ and any $\Lambda$-morphism $\alpha:\Lambda\to \Lambda$, it holds that $\alpha L\in {\bf L}$.
A class of regular $\Lambda$-languages is called a {\it positive variety} if it is closed under joins, meets, left and right quotients, inverse homomorphisms, and $\Lambda$-morphisms.
Also a class of finite ordered monoids is called a {\it pseudo-variety} if it is closed under taking divisors and finite direct products.
The goal of this section is to prove the Eilenberg's correspondence:
\begin{claim}
Let $\Lambda$ be a complete lattice.
There is a natural correspondence between positive varieties of $\Lambda$-languages and pseudo-varieties of finite ordered monoids.
\end{claim}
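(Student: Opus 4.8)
The plan is to mimic Pin's argument. To each pseudo-variety $\mathbf{V}$ of finite ordered monoids I associate the class $\mathcal{V}$ of $\Lambda$-languages recognized by some member of $\mathbf{V}$; to each positive variety $\mathbf{L}$ I associate the pseudo-variety $\mathbf{V}_{\mathbf{L}}$ generated (under divisors and finite direct products) by the syntactic ordered monoids $M(L)$ of the languages $L\in\mathbf{L}$. I then show that these two assignments are monotone and mutually inverse. This presupposes the theory of the syntactic ordered monoid developed at the start of Section~3, in particular that $M(L)$ is finite for regular $L$ and that $L$ is recognized by $(M,\le)$ if and only if $M(L)$ divides $(M,\le)$, with the syntactic order given by $m\le m'$ iff $P(umv)\le P(um'v)$ for all $u,v$, where $P$ is the syntactic coloring. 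That $\mathcal{V}$ is always a positive variety is then immediate: closure under finite joins, finite meets, quotients, inverse homomorphisms and $\Lambda$-morphisms follows from Theorem~\ref{th-basic_closure_properties} together with closure of $\mathbf{V}$ under finite products, and every member of $\mathcal{V}$ is regular.

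\textbf{The easy round trip.} For a pseudo-variety $\mathbf{V}$ I prove $\mathbf{V}_{\mathcal{V}}=\mathbf{V}$. The inclusion $\subseteq$ holds because the syntactic monoid of any $L\in\mathcal{V}$ divides a recognizer of $L$ lying in $\mathbf{V}$, hence lies in $\mathbf{V}$. For $\supseteq$, fix $M\in\mathbf{V}$ and a surjective morphism $\eta\colon\Sigma^{*}\to M$; for each $m_0\in M$ the map $P_{m_0}\colon M\to\Lambda$ with $P_{m_0}(x)=\mathbb{0}$ if $x\le m_0$ and $P_{m_0}(x)=\mathbb{1}$ otherwise is an op-coloring, so the language $L_{m_0}$ recognized by $(\eta,M,P_{m_0})$ lies in $\mathcal{V}$. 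Since the family $\{P_{m_0}\}_{m_0\in M}$ jointly reflects the order of $M$ (this is where non-triviality $\mathbb{0}\ne\mathbb{1}$ is used), the natural morphism from $M$ into the finite product $\prod_{m_0\in M}M(L_{m_0})$ is an order-embedding, so $M$ divides that product and $M\in\mathbf{V}_{\mathcal{V}}$.

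\textbf{The hard round trip.} For a positive variety $\mathbf{L}$ I prove $\mathcal{V}_{\mathbf{V}_{\mathbf{L}}}=\mathbf{L}$; the inclusion $\mathbf{L}\subseteq\mathcal{V}_{\mathbf{V}_{\mathbf{L}}}$ is clear, since $L$ is recognized by $M(L)\in\mathbf{V}_{\mathbf{L}}$. Conversely, let $L\in\mathcal{V}_{\mathbf{V}_{\mathbf{L}}}$; then $L$ is recognized by a divisor of some finite product $N=\prod_{j=1}^{k}M(K_j)$ with $K_j\in\mathbf{L}$, hence by $N$ itself (Theorem~\ref{th-basic_recognition}(iii)), say via $(\mu,N,Q)$. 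Writing $Q=\bigvee_{n\in N}Q_n$ with $Q_n(m)=Q(n)$ for $m\ge n$ and $Q_n(m)=\mathbb{0}$ otherwise, and using that the principal up-set of $n$ in $N$ is the product of the principal up-sets of its components, each $Q_n$ is a product meet of op-colorings on the factors $M(K_j)$; pushing this through $\mu$ exhibits $L$ as a finite join of finite meets of $\Lambda$-languages, each recognized by some $M(K_j)$. So everything reduces to a \emph{recoloring lemma}: every $\Lambda$-language recognized by $M(K)$ with $K\in\mathbf{L}$ belongs to $\mathbf{L}$. Given such a language $K'=R\circ\nu$ with $\nu\colon(\Sigma')^{*}\to M(K)$, choose for each letter $b$ a word $w_b$ with $\eta_K(w_b)=\nu(b)$ (the syntactic morphism $\eta_K$ is surjective) and let $h$ be the induced homomorphism, so $\nu=\eta_K\circ h$ and $K'=(R\circ\eta_K)\circ h$; by closure under inverse homomorphisms it suffices to put $R\circ\eta_K$ in $\mathbf{L}$. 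The finite family $\mathcal{Q}=\{u\backslash P_K/v\mid u,v\in M(K)\}$ of op-colorings contains $P_K$, satisfies $\Phi\circ\eta_K=s\backslash K/t\in\mathbf{L}$ for suitable words $s,t$ depending on $\Phi$, and by definition of the syntactic order jointly reflects the order of $M(K)$. Decomposing $R=\bigvee_{n}R_n$ as above and rewriting ``$m\ge n$'' through $\mathcal{Q}$ yields $R_n=\bigwedge_{\Phi\in\mathcal{Q}}\delta_{\Phi,n}\circ\Phi$, where $\delta_{\Phi,n}\colon\Lambda\to\Lambda$ sends $\lambda$ to $R(n)$ if $\lambda\ge\Phi(n)$ and to $\mathbb{0}$ otherwise; one checks $\delta_{\Phi,n}$ is order-preserving, hence a $\Lambda$-morphism. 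Therefore $R\circ\eta_K=\bigvee_{n}\bigwedge_{\Phi\in\mathcal{Q}}\delta_{\Phi,n}(\Phi\circ\eta_K)$ is a finite join of finite meets of $\Lambda$-morphism images of quotients of $K$, so $R\circ\eta_K\in\mathbf{L}$.

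\textbf{Main obstacle.} I expect the recoloring lemma, together with the product decomposition feeding into it, to be the crux: this is precisely where passing from the Boolean lattice to an arbitrary complete lattice forces genuinely new ingredients, and where closure under $\Lambda$-morphisms is indispensable — for $\mathbb{B}$ the maps $\delta_{\Phi,n}$ are identities or constants and the argument collapses to Pin's observation that a language recognized by $M(L)$ is a Boolean combination of quotients of $L$. The remaining bookkeeping — monotonicity of $\mathbf{V}\mapsto\mathcal{V}$ and $\mathbf{L}\mapsto\mathbf{V}_{\mathbf{L}}$, and assembling the two equalities into a bijection — is routine.
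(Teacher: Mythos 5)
Your proposal is correct and follows essentially the same route as the paper: both directions rest on the minimality of the syntactic ordered monoid, a decomposition of an arbitrary op-coloring into order-indicator pieces, and a recoloring step in which $\Lambda$-morphisms applied to quotient colorings $u\backslash P_K/v$ recover the syntactic order --- your recoloring lemma is exactly the paper's Lemma~\ref{lem-disjunct_1} combined with Lemma~\ref{lem-recog_by_synt}, and your $\delta_{\Phi,n}$ plays the role of the paper's $\alpha_y$. The only differences are cosmetic: you decompose along principal up-sets ($Q=\bigvee_n Q_n$) where the paper uses principal down-sets ($P=\bigwedge_m(\iota[m]\lor\mathrm{cons}(P(m)))$), and you fold the constant $R(n)$ into the $\Lambda$-morphism rather than joining with $\mathrm{cons}(P(m))$ afterwards.
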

The standard definition of a positive variety requires closure under joins, meets, left and right quotients, and inverse homomorphisms. However, our definition additionally requires closure under $\Lambda$-morphisms.
This is based on the following technical reason: 
Without $\Lambda$-morphisms, a one-to-one correspondence cannot be established. Let us consider the simplest example. 
For a lattice $\Lambda$ with three or more elements, the class of languages ${\bf Cons}=\{{\rm cons}(\lambda):\Sigma^*\to \Lambda\mid \lambda\in \Lambda,\Sigma\text{ is a finite alphabet}\}$ consisting of all constant colorings is clearly closed under joins, meets, left and right quotients, and inverse homomorphisms. 
Also, its subclass ${\bf B}=\{{\rm cons}({\mathbb{0}}),{\rm cons}({\mathbb{1}})\}$ is closed under these operations. (This class corresponds to $\{\emptyset,\Sigma^* \}$ in the usual sense.) 
However, both ${\bf Cons}$ and ${\bf B}$ correspond to the same pseudo-variety of ordered monoids, which consists only of the trivial monoid. 
This leads to a many-to-one correspondence between standard positive varieties and pseudo-varieties. 
Intuitively, ${\bf B}$ is inadequate as a class of $\Lambda$-language because it is inherently a class of $\mathbb{B}$-languages. 
The solution to this problem is taking closure of $\Lambda$-morphisms.
Because all constant coloring can be constructed using $\Lambda$-morphisms, ${\bf B}$ does not form a positive variety in our framework.
Note that the positive variety of $\mathbb{B}$-languages corresponds to the positive variety of languages in the usual sense.
This is because a $\mathbb{B}$-morphism is either ${\rm cons}(0)$, ${\rm cons}(1)$, or the identity mapping.
Therefore, our result is a valid generalization of \cite{pin1995variety}.

\subsection{Syntactic Ordered Monoids}
We first define the syntactic ordered monoids of $\Lambda$-languages.
Let $\Lambda$ be a complete lattice with an order $\le$, and let $L$ be a $\Lambda$-language over $\Sigma$.
Define the preorder $\preceq_L$ on $\Sigma^*$ as\[
w_1 \preceq_L w_2 \iff \forall u,v\in \Sigma^*.\, (L(uw_1v)\le L(uw_2v) )~,
\]
and define the equivalence relation $\simeq_L$ on $\Sigma^*$ as $w_1 \simeq_L w_2 $ iff $(w_1 \preceq_L w_2$ and $w_2 \preceq_L w_1)$.
We call $\simeq_L$ the {\it syntactic congruence} of $L$.
This is indeed a congruence, and thus, the quotient $M_L=\Sigma^*/\simeq_L$ forms a monoid.
The ordered monoid $(M_L,\le_L)$ with the order $\le_L$ induced by $\preceq_L$ is called the {\it syntactic ordered monoid} of $L$.
In addition, the natural surjective morphism $\eta_L:{(\Sigma^*,=)}\to (M_L,\le_L)$ where $\eta_L(w)$ is the equivalence class of $w$ is called the {\it syntactic morphism}
of $L$.

The key property of the syntactic ordered monoid $M_L$ is that it is the `smallest' monoid recognizing $L$. 
The following theorem, which states this, can be shown in almost the same way as in \cite{pin1995variety}.
\begin{theorem}\label{th-minimality_of_synt}
Let $L$ be a $\Lambda$-language over an alphabet $\Sigma$.
If an ordered monoid $(M,\le)$ recognizes $L$, then the syntactic ordered monoid $(M_L,\le_L)$ is a divisor of $(M,\le)$.
\end{theorem}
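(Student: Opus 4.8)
The plan is to mimic the classical argument from \cite{pin1995variety}: assume $L$ is recognized by some triple $(\eta,(M,\le),P)$ with $\eta:(\Sigma^*,=)\to(M,\le)$ and $P:M\to\Lambda$ an op-coloring, and exhibit $(M_L,\le_L)$ as a quotient of a submonoid of $(M,\le)$. First I would set $N=\eta(\Sigma^*)$, the image of $\eta$, which is a submonoid of $(M,\le)$ with the restricted order; replacing $M$ by $N$ we may assume $\eta$ is surjective. Then I would build a surjective morphism $\pi:N\to M_L$ of ordered monoids showing $M_L$ is a quotient of $N$, hence a divisor of $M$.

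The heart of the argument is defining $\pi$ and checking it is a well-defined, surjective, order-preserving monoid morphism. For $m\in N$, pick any $w\in\Sigma^*$ with $\eta(w)=m$ and set $\pi(m)=\eta_L(w)$. To see this is well-defined and order-preserving simultaneously, the key claim is: for $w_1,w_2\in\Sigma^*$, if $\eta(w_1)\le\eta(w_2)$ then $w_1\preceq_L w_2$. Indeed, for any $u,v\in\Sigma^*$, compatibility of $\le$ with the monoid operation gives $\eta(uw_1v)=\eta(u)\eta(w_1)\eta(v)\le\eta(u)\eta(w_2)\eta(v)=\eta(uw_2v)$, and since $P$ is order-preserving, $L(uw_1v)=P(\eta(uw_1v))\le P(\eta(uw_2v))=L(uw_2v)$; as $u,v$ were arbitrary, $w_1\preceq_L w_2$. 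Applying this in both directions, $\eta(w_1)=\eta(w_2)$ forces $w_1\simeq_L w_2$, so $\pi$ is well-defined; and $\eta(w_1)\le\eta(w_2)$ forces $\eta_L(w_1)\le_L\eta_L(w_2)$, so $\pi$ is order-preserving. That $\pi$ is a monoid morphism is immediate from $\eta(w_1w_2)=\eta(w_1)\eta(w_2)$ and $\eta_L(w_1w_2)=\eta_L(w_1)\eta_L(w_2)$ together with the choice of representatives; surjectivity follows because $\eta_L$ is surjective and every class $\eta_L(w)$ is hit by $\eta(w)\in N$. By construction $\eta_L=\pi\circ(\eta\!\restriction_{\Sigma^*})$, so the factorization is consistent.

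Putting it together: $M_L$ is a quotient of the submonoid $N$ of $M$, i.e.\ a divisor of $M$, which is the claim. I do not expect a serious obstacle here — the only subtlety, and the one point worth stating carefully, is that the single claim ``$\eta(w_1)\le\eta(w_2)\implies w_1\preceq_L w_2$'' does double duty, yielding both well-definedness (via $\simeq_L$) and order-preservation of $\pi$, and that it uses in an essential way both the compatibility of $\le$ with multiplication in $M$ and the order-preservation of the op-coloring $P$ (this is exactly where the definition of op-coloring, as opposed to an arbitrary coloring, is needed). Everything else is the routine verification that a map defined on representatives of a quotient is a morphism, which can be dispatched in a line or two.
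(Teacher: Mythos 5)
Your proposal is correct and follows essentially the same route as the paper: restrict to the image $N=\eta(\Sigma^*)$, show $\ker(\eta)\subseteq{\simeq_L}$ to get a well-defined surjective monoid morphism onto $M_L$, and verify order-preservation via the implication $\eta(w_1)\le\eta(w_2)\Rightarrow w_1\preceq_L w_2$, using compatibility of $\le$ with multiplication and the order-preservation of $P$. The only cosmetic difference is that you package well-definedness and order-preservation into a single claim, whereas the paper states them separately.
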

\begin{proof}
Assume that $L$ is recognized by $(\eta,(M,\le),P)$.
Then, $N={\rm Im}(\eta)$ forms a submonoid of $(M,\le)$ with the restricted order of $\le$ to $N$.
We can show that $M_L$ is an (unordered) quotient of $N$ in the usual way:
Let ${\rm ker}(\eta)$ be the kernel of $\eta$. Then, ${\rm ker}(\eta)$ is a subset of $\simeq_L$, and this implies that the natural surjective mapping $\phi:[w]_{{\rm ker}(\eta)}\mapsto [w]_{\simeq_L}$ is a monoid morphism from $N$ to $M_L$.
Thus, it suffices to show that $\phi$ preserves the order.
Let $m_1\le m_2\in N$, and assume that $m_1=\eta(w_1)$ and $m_2=\eta(w_2)$ with $w_1,w_2\in \Sigma^*$.
Then, for all $u,v\in \Sigma^*$, $\eta(uw_1v)=\eta(u)m_1\eta(v)\le \eta(u)m_2\eta(v)=\eta(uw_2v)$ holds by the assumption that $\le$ is compatible with the operation of $M$.
In addition, because $P$ is order preserving, $L(uw_1v)=P\eta(uw_1v)\le P\eta(uw_2v)=L(uw_2v)\in \Lambda$.
Thus, $w_1\preceq_L w_2$ holds, and we have $\phi(m_1)=[w_1]_{\simeq_L}\le_L [w_2]_{\simeq_L}=\phi(m_2)$.\qed
\end{proof}

\subsection{Ideal Colorings}
Let $\Lambda$ be a lattice, and let $m\in M$ be an element of an ordered monoid $(M,\le)$.
The $m$-{\it ideal coloring} is the op-coloring $\iota[m]:M\to \Lambda$ such that\[
\iota[m](x)=\begin{cases}
\mathbb{0} &\text{if }x\le m,\\
\mathbb{1} &\text{otherwise.}
\end{cases}
\]
Ideal colorings play an important role in the following fact.
\begin{proposition}
   \label{prop-ideal_representation}
Let $\Lambda$ be a complete lattice, and let $(M,\le)$ be an ordered monoid.
For an op-coloring $P:M\to \Lambda$, it holds that $P=\bigwedge_{m\in M}(\iota[m]\lor {\rm cons}(P(m)))$.
\end{proposition}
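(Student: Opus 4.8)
The plan is to verify the identity pointwise: fix an arbitrary $x\in M$ and show that $P(x)=\bigl(\bigwedge_{m\in M}(\iota[m]\lor {\rm cons}(P(m)))\bigr)(x)$, which by definition of the pointwise meet and join equals $\bigwedge_{m\in M}\bigl(\iota[m](x)\lor P(m)\bigr)$. Note that the right-hand side is well-defined precisely because $\Lambda$ is complete. I would establish the two inequalities separately and conclude by antisymmetry of $\le$.

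For the inequality $P(x)\le \bigwedge_{m\in M}(\iota[m](x)\lor P(m))$, it suffices to show $P(x)\le \iota[m](x)\lor P(m)$ for every single $m\in M$. Here I would split into two cases according to the definition of the ideal coloring. If $x\le m$, then $\iota[m](x)=\mathbb{0}$, so $\iota[m](x)\lor P(m)=P(m)$, and since $P$ is order-preserving and $x\le m$ we get $P(x)\le P(m)$, as required. If $x\not\le m$, then $\iota[m](x)=\mathbb{1}$, so $\iota[m](x)\lor P(m)=\mathbb{1}\ge P(x)$ trivially. Thus $P(x)$ is a lower bound of the family $\{\iota[m](x)\lor P(m)\}_{m\in M}$ and hence $P(x)\le\bigwedge_{m\in M}(\iota[m](x)\lor P(m))$.

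For the reverse inequality, I would simply evaluate the particular index $m=x$: by reflexivity $x\le x$, so $\iota[x](x)=\mathbb{0}$ and therefore $\iota[x](x)\lor P(x)=\mathbb{0}\lor P(x)=P(x)$. Since the meet is below every member of the family, $\bigwedge_{m\in M}(\iota[m](x)\lor P(m))\le \iota[x](x)\lor P(x)=P(x)$. Combining the two inequalities gives equality at $x$, and since $x$ was arbitrary, the two op-colorings coincide.

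This argument is essentially routine; the only real input is the order-preservation of $P$, used in the case $x\le m$, together with completeness of $\Lambda$ to guarantee the meet exists — so there is no serious obstacle, and the proof is short.
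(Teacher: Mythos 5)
Your proof is correct and follows essentially the same route as the paper: evaluate pointwise, use order-preservation of $P$ to handle the indices $m$ with $x\le m$, and use the index $m=x$ (via reflexivity) to pin the meet down to $P(x)$. The only cosmetic difference is that you argue via two inequalities and antisymmetry, whereas the paper simplifies the meet directly to $\bigwedge_{m\ge x}P(m)=P(x)$; the content is identical.
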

\begin{proof}
    Let $Q=\bigwedge_{m\in M}(\iota[m]\lor {\rm cons}(P(m)))$.
    We show that $P(x)=Q(x)$ for all $x\in M$:
    \begin{align*}
        Q(x)&= (\bigwedge_{m\in M}(\iota[m]\lor {\rm cons}(P(m))))(x)\\
        &=\bigwedge_{m\in M}(\iota[m](x)\lor {\rm cons}(P(m))(x))\\
        &=\bigwedge_{m\in M}(\iota[m](x)\lor P(m))
    \end{align*}
where $\iota[m](x)\lor P(m)$ is $P(m)$ if $x\le m$, and $\mathbb{1}$ otherwise.
Thus, we have
\[
    Q(x)= \bigwedge_{m\in M,~x\le m}P(m)
    =P(x)\]
because $P(x)\le P(m)$ if $x\le m$. This concludes the proof.\qed
\end{proof}

The following lemma is an essential part of the variety theorem.

\begin{lemma}\label{lem-disjunct_1}
Let $L:\Sigma^*\to \Lambda$ be a regular $\Lambda$-language, and let $M_L$ and $\eta_L$ be the syntactic ordered monoid and the syntactic morphism of $L$.
Then, for every element $m\in M_L$, the language $\iota[m]\circ \eta_L$ can be represented by using joins, meets, quotients, and $\Lambda$-morphisms with $L$.
\end{lemma}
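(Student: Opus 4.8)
The plan is to express the composite op-coloring $\iota[m]\circ\eta_L$ in terms of $L = P\circ\eta_L$ (for an appropriate syntactic coloring $P$), using only the operations allowed in a positive variety: joins, meets, left and right quotients, inverse homomorphisms, and $\Lambda$-morphisms. The natural route is to first reduce to a statement purely about op-colorings on the finite monoid $M_L$, and then transport it back along $\eta_L$. Concretely, since $\eta_L$ is surjective, pre-composition with $\eta_L$ commutes with all the relevant operations (joins and meets pointwise, quotients by Proposition~\ref{prop-quot_and_morphism}, and $\Lambda$-morphisms trivially), so it suffices to show that each ideal coloring $\iota[m]$ on $M_L$ is built from the syntactic coloring $P_L$ using these operations on $M_L$ — where $P_L:M_L\to\Lambda$ is the coloring with $L=P_L\circ\eta_L$.

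First I would recall that, by the definition of the syntactic preorder, $P_L$ is \emph{injective as an order-reflecting map}: $P_L(m)\le P_L(m')$ for all the ``contexts'' already built in — more precisely, the quotients of $P_L$ by elements of $M_L$ separate points in the sense needed. The key observation is that a left/right quotient $u\backslash L/v = (\eta_L(u)\backslash P_L/\eta_L(v))\circ\eta_L$, and ranging over all $u,v\in\Sigma^*$ we obtain all colorings $a\backslash P_L/b$ for $a,b\in M_L$. Now for a fixed $m\in M_L$, I want to isolate the set $\{x : x\le m\}$. The idea is to use Proposition~\ref{prop-ideal_representation} in reverse: we know every op-coloring, in particular every quotient $a\backslash P_L/b$, is a meet of shifted ideal colorings, but we need to go the other way — extract a single $\iota[m]$. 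The mechanism for this is that the syntactic monoid is the \emph{smallest} recognizer (Theorem~\ref{th-minimality_of_synt}), so for two distinct elements $m\not\le m'$ there exist $a,b\in M_L$ with $(a\backslash P_L/b)(m)\not\le (a\backslash P_L/b)(m')$; one then combines finitely many such quotient colorings, composes each with a suitable $\Lambda$-morphism to ``binarize'' the relevant threshold (sending everything $\ge$ some value to $\mathbb{1}$ and the rest to $\mathbb{0}$ — these are order-preserving, hence legitimate $\Lambda$-morphisms), and takes the meet over the finitely many $m'$ with $m'\not\le m$. Since $M_L$ is finite, this is a finite meet, and the resulting coloring should be exactly $\iota[m]$.

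The main obstacle I expect is constructing, for a given pair $m,m'$ with $m\not\le m'$, the correct $\Lambda$-morphism that converts the separating quotient coloring $a\backslash P_L/b$ into a clean $\{\mathbb 0,\mathbb 1\}$-valued coloring that takes value $\mathbb 0$ on all $x\le m$ and value $\mathbb 1$ on $m'$ — the difficulty is that the set $\{\lambda : \lambda \le (a\backslash P_L/b)(m)\}$ need not be an up-set or down-set closed under the lattice operations in a way that makes the thresholding map order-preserving in general. I would handle this by using, for each value $\lambda_0\in\Lambda$, the $\Lambda$-morphism $\chi_{\lambda_0}:x\mapsto \mathbb 0$ if $x\le\lambda_0$ and $\mathbb 1$ otherwise — this \emph{is} order-preserving (it is exactly $\iota[\lambda_0]$ viewed as a map $\Lambda\to\Lambda$) — and note that $\chi_{\lambda_0}\circ(a\backslash P_L/b)\circ\eta_L = \iota[\lambda_0]\circ(a\backslash P_L/b)\circ\eta_L$ is obtained from $L$ by a quotient followed by a $\Lambda$-morphism. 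Choosing $\lambda_0 = (a\backslash P_L/b)(m)$ gives a coloring that is $\mathbb 0$ on $\{x : (a\backslash P_L/b)(x)\le (a\backslash P_L/b)(m)\}\supseteq\{x:x\le m\}$ and $\mathbb 1$ on $m'$; intersecting (meet) these over all $m'\not\le m$ — finitely many by finiteness of $M_L$ — and noting that an element $x$ satisfies $x\le m$ iff it is \emph{not} the case that $m'\le x$ forces $x\le m$... one verifies the meet is precisely $\iota[m]\circ\eta_L$. I would close by checking the two inclusions: value $\mathbb 0$ is attained exactly on $\eta_L^{-1}(\{x\le m\})$ (each factor is $\mathbb 0$ there, and conversely if $x\not\le m$ pick the factor for $m' = x$... adjusting to $m'$ ranging over elements not below $m$), completing the argument.
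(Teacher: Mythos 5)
Your construction is essentially the paper's: for each $y\nleq m$ you find a separating context $u,u'$ with $(u\backslash P_L/u')(y)\nleq (u\backslash P_L/u')(m)$, post-compose the quotient coloring with the order-preserving thresholding map $\chi_{\lambda_0}$ at $\lambda_0=(u\backslash P_L/u')(m)$ to obtain a $\{\mathbb{0},\mathbb{1}\}$-valued coloring $Q_y$ that is $\mathbb{0}$ on $\{x\mid x\le m\}$ and $\mathbb{1}$ at $y$, and then combine the finitely many $Q_y$. The one genuine error is the choice of combination: you take the \emph{meet} over $y\nleq m$, but the correct operation is the \emph{join}, $\iota[m]=\bigvee_{y\nleq m}Q_y$. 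With the meet the construction fails: for $x\nleq m$ you only know that the single factor $Q_x$ equals $\mathbb{1}$ at $x$; another factor $Q_{y}$, built from a context chosen to separate a \emph{different} $y$ from $m$, may well take the value $\mathbb{0}$ at $x$, since nothing forces $(u\backslash P_L/u')(x)\nleq(u\backslash P_L/u')(m)$ for that context, and then the meet is $\mathbb{0}$ where $\iota[m](x)$ should be $\mathbb{1}$. Your own verification sentence --- ``if $x\not\le m$ pick the factor for $m'=x$'' --- is precisely the verification for a join (one factor equal to $\mathbb{1}$ forces a join of $\{\mathbb{0},\mathbb{1}\}$-valued colorings to be $\mathbb{1}$; it does not force a meet to be $\mathbb{1}$), so the fix is local: replace the meet by the join and the two-inclusion check goes through exactly as you sketched.

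Two smaller points. First, when $m$ is the greatest element of $M_L$ the index set $\{y\mid y\nleq m\}$ is empty; the paper treats this case separately via $\iota[m]\circ\eta_L=\mathrm{cons}(\mathbb{0})\circ L$, and you should do likewise rather than rely on an empty join. Second, your opening reduction (work with op-colorings on $M_L$ and transport back along $\eta_L$) is sound and is exactly what the paper's final chain of equalities accomplishes via Proposition~\ref{prop-quot_and_morphism}; no additional appeal to surjectivity of $\eta_L$ is needed there.
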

\begin{proof}
    If $m\in M_L$ is the greatest element, then the statement is clear because $\iota[m]\circ \eta_L={\rm cons}(\mathbb{0})\circ L$ holds with the $\Lambda$-morphism ${\rm cons}(\mathbb{0}):\Lambda\to \Lambda$.
    We assume that $m\in M_L$ is not the greatest element.
    Let $P_L:M_L\to \Lambda$ denote the op-coloring such that $L=P_L\circ \eta_L$.
    Pick $y\in M_L$ such that $ y\nleq m$.
    By the definition of the syntactic ordered monoid, 
    there are words $w,w'\in M_L$ such that $P_L(uyu')\nleq P_L(umu')$ with $u=\eta_L(w)$ and $u'=\eta_L(w')$.
    Then, $(u\backslash P_L/u')(y) = P_L(uyu')\nleq P_L(umu') =(u\backslash P_L/u')(m) $.
    Here, let $\alpha_y:\Lambda \to \Lambda$ be the $\Lambda$-morphism defined as\[
\alpha_y(\lambda)=\begin{cases}
\mathbb{0} &\text{if }\lambda\le (u\backslash P_L/u')(m),\\
\mathbb{1} &\text{otherwise.}
\end{cases}
\]
Let $Q_y=\alpha_y\circ (u\backslash P_L/u')$.
Then, we have $Q_y(x)=\mathbb{0}$ if $x\le m$, and $Q_y(x)=\mathbb{1}$ if $x=y$ by the following reasons:
The first part follows from that $x\le m$ implies $(u\backslash P_L/u')(x)\le (u\backslash P_L/u')(m)$ because $(u\backslash P_L/u')$ is order-preserving.
The second part follows because $(u\backslash P_L/u')(y)\nleq (u\backslash P_L/u')(m)$ as we have seen before.
Therefore, $\forall y\nleq m.(Q_y(x)=\mathbb{0})$ if $x\le m$, and $Q_x(x)=\mathbb{1}$ if $x\nleq m$.
Consequently, the ideal coloring $\iota[m]$ can be constructed as $\iota[m]=\bigvee_{y\nleq m}Q_y$.
Then, we have that \begin{align*}
    \iota[m]\circ \eta_L&=(\bigvee_{y\nleq m}Q_y)\circ \eta_L&[\iota[m]=\bigvee_{y\nleq m}Q_y]\\
    &=\bigvee_{y\nleq m}(\alpha_y\circ (u\backslash P_L/v')\circ \eta_L)&[Q_y=\alpha\circ (u\backslash P_L/u')]\\
    &=\bigvee_{y\nleq m}(\alpha_y\circ (\eta_L(w)\backslash P_L/\eta_L(w'))\circ \eta_L)&[u=\eta_L(w),u'=\eta_L(w')]\\
    &=\bigvee_{y\nleq m}(\alpha_y\circ (w\backslash L/w'))&[\text{by Proposition~\ref{prop-quot_and_morphism}]}
\end{align*} holds.
Thus, $\iota[m]\circ \eta_L$ is represented by using joins, meets, quotients, and $\Lambda$-morphisms with $L$.\qed
\end{proof}
We have the following lemma based on this fact.

\begin{lemma}
\label{lem-recog_by_synt}
    Let ${\bf L}$ be a positive variety of $\Lambda$-languages. Let $L_i\in {\bf L}$ be a $\Lambda$-language over $\Sigma_i$ and let $M_{L_i}$ be its syntactic ordered monoid with $1\le i\le k$.
    Then, every $\Lambda$-language recognized by $M=\prod_{1\le i\le k}{M_{L_i}}$ is contained in ${\bf L}$.
\end{lemma}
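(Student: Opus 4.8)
The plan is to take an arbitrary $\Lambda$-language $K:\Gamma^*\to\Lambda$ recognised by a triple $(\nu,M,P)$ with $M=\prod_{i=1}^{k}M_{L_i}$ and show $K\in{\bf L}$ by breaking $P$ into ideal colorings and pushing each piece down onto a single factor $M_{L_i}$. First I would record two preliminary facts. Since each $L_i$ is regular, each $M_{L_i}$ is finite, hence $M$ is finite; so $K$ is regular, and every join and meet occurring below is finite and therefore respected by the positive variety ${\bf L}$. Moreover, every constant coloring ${\rm cons}(\lambda):\Gamma^*\to\Lambda$ lies in ${\bf L}$: choosing any homomorphism $h:\Gamma^*\to\Sigma_1^*$ gives $L_1\circ h\in{\bf L}$ by closure under inverse homomorphisms, and then ${\rm cons}(\lambda)\circ(L_1\circ h)={\rm cons}(\lambda)$ over $\Gamma$ lies in ${\bf L}$ by closure under $\Lambda$-morphisms. (If $k=0$ the monoid $M$ is trivial and $K$ is itself a constant, so this case is immediate.)

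Next I would apply Proposition~\ref{prop-ideal_representation} to write $P=\bigwedge_{m\in M}(\iota[m]\lor{\rm cons}(P(m)))$, whence
\[
K=P\circ\nu=\bigwedge_{m\in M}\bigl((\iota[m]\circ\nu)\lor{\rm cons}(P(m))\bigr),
\]
a finite meet of finite joins. By the preliminary facts and closure under finite joins and meets, it suffices to show $\iota[m]\circ\nu\in{\bf L}$ for each $m\in M$. Writing $m=(m_i)_{1\le i\le k}$, the definition of the product order gives $x\le m\iff x_i\le_{L_i}m_i$ for all $i$, so $\iota[m]=\bigvee_{i=1}^{k}(\iota[m_i]\circ\pi_i)$, where $\pi_i:M\to M_{L_i}$ is the $i$-th projection: both sides take the value $\mathbb{0}$ exactly on $\{x\mid x\le m\}$ and the value $\mathbb{1}$ elsewhere. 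Hence $\iota[m]\circ\nu=\bigvee_{i=1}^{k}\bigl(\iota[m_i]\circ(\pi_i\circ\nu)\bigr)$, and by closure under finite joins it is enough to treat a single $\iota[m_i]\circ(\pi_i\circ\nu)$.

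Fix $i$. The composite $\pi_i\circ\nu:\Gamma^*\to M_{L_i}$ is a morphism and the syntactic morphism $\eta_{L_i}:\Sigma_i^*\to M_{L_i}$ is surjective, so Proposition~\ref{prop-univ_free_2} supplies a monoid morphism $g:\Gamma^*\to\Sigma_i^*$ --- that is, a language homomorphism --- with $\pi_i\circ\nu=\eta_{L_i}\circ g$. Therefore $\iota[m_i]\circ(\pi_i\circ\nu)=(\iota[m_i]\circ\eta_{L_i})\circ g$ is precisely the inverse homomorphism of $\iota[m_i]\circ\eta_{L_i}$ along $g$. By Lemma~\ref{lem-disjunct_1}, $\iota[m_i]\circ\eta_{L_i}$ is obtained from $L_i\in{\bf L}$ by finitely many joins, meets, quotients, and $\Lambda$-morphisms, so $\iota[m_i]\circ\eta_{L_i}\in{\bf L}$; closure under inverse homomorphisms then gives $(\iota[m_i]\circ\eta_{L_i})\circ g\in{\bf L}$. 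Unwinding the successive reductions yields $K\in{\bf L}$, as required.

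I expect the only genuinely delicate point to be the step using Proposition~\ref{prop-univ_free_2}: one must notice that, even though $\pi_i\circ\nu$ need not be surjective and $M_{L_i}$ arose as someone else's syntactic monoid, the surjectivity of $\eta_{L_i}$ together with the freeness of $\Gamma^*$ lets us factor $\pi_i\circ\nu$ through $\eta_{L_i}$ by a genuine alphabet-to-alphabet homomorphism, which is exactly what converts $\iota[m_i]\circ(\pi_i\circ\nu)$ into an inverse-homomorphism instance of the language $\iota[m_i]\circ\eta_{L_i}$ already handled by Lemma~\ref{lem-disjunct_1}. The expression of the product ideal coloring as a finite join of pulled-back factor ideal colorings is the other step that needs a moment's care, but it follows directly from the componentwise description of the product order; everything else is routine bookkeeping with the closure properties of ${\bf L}$.
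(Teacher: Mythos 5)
Your proposal is correct and follows essentially the same route as the paper's proof: decompose $P$ via Proposition~\ref{prop-ideal_representation}, split each product ideal coloring as a finite join of pulled-back factor ideal colorings, factor $\pi_i\circ\nu$ through the surjective syntactic morphism $\eta_{L_i}$ via Proposition~\ref{prop-univ_free_2}, and finish with Lemma~\ref{lem-disjunct_1} and closure under inverse homomorphisms. The only difference is that you spell out a couple of points the paper leaves implicit (that the constants ${\rm cons}(P(m))$ lie in ${\bf L}$, and the degenerate $k=0$ case), which is harmless.
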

\begin{proof}
Let $L$ be a $\Lambda$-language over $\Sigma$ recognized by $(\eta,M,P)$.
We show that $L$ is in $\bf L$.
Let $\pi_i:M\to M_i$ denote the $i$-the projection for each $1\le i\le k$, that is, $\pi_i$ is the monoid morphism such that $\pi_i(m)=m_i$ for each $m=(m_1,\ldots,m_k)\in M$.
By Proposition~\ref{prop-ideal_representation}, $P:M\to \Lambda$ can be represented by $
    P=\bigwedge_{m\in M}(\iota[m]\lor {\rm cons}( P(m))).$
Thus, we have that
\begin{align*}
L&=P\circ \eta\\
&=\bigwedge_{m\in M}(\iota[m]\lor {\rm cons}( P(m))) \circ \eta\\
    &=\bigwedge_{m\in M}(\iota[m]\circ \eta \lor {\rm cons}( P(m)))
\end{align*} 
holds. Note that every constant mapping is a $\Lambda$-morphism.
In addition, the op-coloring $\iota[m]\circ \eta:\Sigma^*\to \Lambda$ is equal to $\bigvee_{1\le i\le k}(\iota[m_i]\circ \pi_i\circ\eta)$ with $m=(m_1,\ldots,m_k)$ by the following reasons:
First,  it is clear that both op-colorings take only the values $\mathbb{0}$ or $\mathbb{1}$.
Then,
\begin{align*}
(\iota[m]\circ \eta)(w)=\mathbb{0}&\iff \eta(w)\le (m_1,\ldots,m_k)\\
&\iff \pi_i(\eta(w))\le m_i\text{ for each }1\le i\le k\\
&\iff \iota[m_i](\pi_i\circ \eta(w))=\mathbb{0}\text{ for each }1\le i\le k\\
&\iff \bigvee_{1\le i\le k}\bigl((\iota[m_i]\circ \pi_i\circ\eta)(w)\bigr)=\mathbb{0}
\end{align*} 
holds for each $w\in \Sigma^*$. 
This implies $\iota[m]\circ \eta=\bigvee_i(\iota[m_i]\circ \pi_i\circ\eta)$.
Therefore, $L$ can be represented by using meets, joins, and $\Lambda$-morphisms of languages $\iota[m_i]\circ \pi_i\circ\eta:\Sigma^*\to \Lambda$ with $m=(m_1,\ldots,m_k)\in M$ and $1\le i\le k$.

We conclude the proof by showing that the language $\iota[m_i]\circ \pi_i\circ\eta:\Sigma^*\to \Lambda$ is in $\bf L$ for each $m=(m_1,\ldots,m_k)\in M$ and $1\le i\le k$.
Let $\eta_{L_i}:\Sigma_i^*\to M_i$ be the syntactic morphism of $L_i$.
Because $\eta_{L_i}$ is surjective, there is a morphism $\psi:\Sigma^*\to \Sigma_i^*$ such that the diagram
    \begin{equation*}
  \xymatrix{
  {\Sigma^*}\ar@{..>}[r]^-{ \exists \psi}\ar@{->}[rd]_-{\pi_i\circ \eta}
  &{\Sigma_i^*}\ar@{->}[d]^-{\eta_{L_i}}&\\
  &{M_i}\ar@{->}[r]_-{\iota[m_i]}&\Lambda
  }
\end{equation*}
commutes by Proposition~\ref{prop-univ_free_2}.
Now, by Lemma~\ref{lem-disjunct_1}, the language $\iota[m_i]\circ \eta_{L_i}$ can be represented by using joins, meets, quotients, and $\Lambda$-morphisms with $L_i$.
This implies $\iota[m_i]\circ \eta_{L_i}\in \bf L$.
Therefore, its inverse image $(\iota[m_i]\circ \eta_{L_i})\circ \psi=\iota[m_i]\circ \pi_i\circ\eta$ under $\psi$ is also in $\bf L$.\qed
\end{proof}

\subsection{The Variety Theorem}
Let us prove the variety theorem. 
Let $\Lambda$ be a complete lattice.
For a positive variety $\bf L$ of $\Lambda$-languages, define the class of ordered monoids $\mathcal{V}({\bf L})$ as the smallest variety containing $\{M\mid M$ is the syntactic ordered monoid of $L\in {\bf L}\}$. 
In addition, for a pseudo-variety of ordered monoids $\cal M$, define the class of $\Lambda$-languages $\mathbf{V}({\cal M})$ as $\{L\mid L$ is recognized by $M\in {\cal M}\}$.
Note that $\mathbf{V}({\cal M})$ forms a positive variety by Theorem~\ref{th-basic_closure_properties}.

In \cite{pin1995variety}, the following is proved.
\begin{proposition}[Proposition 5.3, \cite{pin1995variety}]
    Let $\cal M$ be a pseudo-variety of ordered monoids, and let $(M,\le )\in \cal M$.
    Then there are languages $L_1,\ldots,L_k\in {\bf V}(\cal M)$ such that $(M,\le )$ is a submonoid of $M_{L_1}\times \cdots\times M_{L_k}$ where $M_{L_i}$ is the syntactic ordered monoid of $L_i$ with $1\le i\le k$.\bqed
\end{proposition}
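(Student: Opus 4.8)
The plan is to realize $(M,\le)$ as an isomorphic copy of a submonoid of a finite product of syntactic ordered monoids of suitably chosen languages, adapting the classical argument for ordered monoids to the lattice-valued setting. Since $\mathcal{M}$ is a pseudo-variety, $(M,\le)$ is finite; take the alphabet $\Sigma := M$ and, by Proposition~\ref{prop-univ_free_1}, let $\eta : (\Sigma^*,=) \to (M,\le)$ be the unique morphism extending the identity map $M\to M$, which is surjective because every element of $M$ occurs as a letter. For each $m\in M$ let $L_m := \iota[m]\circ\eta : \Sigma^*\to\Lambda$, the $\Lambda$-language recognized by the triple $(\eta,(M,\le),\iota[m])$ with $\iota[m]$ the $m$-ideal coloring. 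As $M\in\mathcal{M}$, every $L_m$ lies in $\mathbf{V}(\mathcal{M})$; write $(M_{L_m},\le_{L_m})$ and $\eta_{L_m}$ for its syntactic ordered monoid and syntactic morphism. (By Theorem~\ref{th-minimality_of_synt} each $M_{L_m}$ is in fact a divisor of $M$, but we shall not need this.)

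Next I would consider the morphism $\Phi := \langle\eta_{L_m}\rangle_{m\in M} : (\Sigma^*,=)\to\prod_{m\in M}(M_{L_m},\le_{L_m})$, $w\mapsto(\eta_{L_m}(w))_{m\in M}$, and prove the key equivalence
\[
\Phi(w)\le\Phi(w') \iff \eta(w)\le_M\eta(w')\qquad(w,w'\in\Sigma^*).
\]
By definition the left-hand side means $w\preceq_{L_m}w'$ for every $m\in M$; since $L_m$ takes only the values $\mathbb{0}$ and $\mathbb{1}$, this amounts to the implication $\eta(uw'v)\le m \Rightarrow \eta(uwv)\le m$ holding for all $u,v\in\Sigma^*$ and all $m\in M$. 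For the implication $\Leftarrow$, if $\eta(w)\le_M\eta(w')$ then compatibility of $\le$ with the multiplication of $M$ gives $\eta(uwv)\le_M\eta(uw'v)$ for all $u,v$, and the required implication follows by transitivity. For $\Rightarrow$, it suffices to instantiate $m:=\eta(w')$ and $u=v=\varepsilon$, which immediately yields $\eta(w)\le_M\eta(w')$.

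From this equivalence, together with antisymmetry of $\le$, the preorders induced on $\Sigma^*$ by $\Phi$ and by $\eta$ coincide; hence $\eta(w)\mapsto\Phi(w)$ is a well-defined bijective monoid morphism from $(M,\le)$ onto $\mathrm{Im}(\Phi)$ whose inverse is again order-preserving, so it is an isomorphism of ordered monoids. As $\mathrm{Im}(\Phi)$ is a submonoid of the finite product $\prod_{m\in M}(M_{L_m},\le_{L_m})$, enumerating $M=\{m_1,\dots,m_k\}$ and setting $L_i:=L_{m_i}$ completes the argument.

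I expect the main obstacle to be precisely the displayed equivalence: one has to get the direction of the implication in $w\preceq_{L_m}w'$ correct, and — since the paper has already warned that a bijective morphism of ordered monoids need not be an isomorphism — one must establish both implications, so that the induced bijection $M\cong\mathrm{Im}(\Phi)$ reflects the order as well as preserving it. Everything else (finiteness, membership of the $L_m$ in $\mathbf{V}(\mathcal{M})$, and that $\Phi$ is a morphism) is routine.
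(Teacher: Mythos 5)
Your proof is correct and is essentially the standard argument: the paper itself does not reprove this statement but imports it from Pin (Proposition~5.3 of the cited work), whose proof is exactly your construction with the order ideals $\{x \mid x\le m\}$ playing the role that your ideal colorings $\iota[m]$ play in the lattice setting. Your key equivalence $\Phi(w)\le\Phi(w')\iff\eta(w)\le\eta(w')$, including the instantiation $m:=\eta(w')$, $u=v=\varepsilon$ for the forward direction and the care taken that the bijection onto $\mathrm{Im}(\Phi)$ reflects the order, is exactly what is needed and is carried out correctly.
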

Note that a $\mathbb{B}$-language can be considered as a $\Lambda$-language. Therefore, we have the following corollary.
\begin{corollary}\label{cor-subdirect_of_synt}
    Let $\cal M$ be a pseudo-variety of ordered monoids, and let ${(M,\le )}\in \cal M$.
    Then there are $\Lambda$-languages $L_1,\ldots,L_k\in {\bf V}(\cal M)$ such that $(M,\le )$ is a submonoid of $M_{L_1}\times \cdots\times M_{L_k}$.\bqed
\end{corollary}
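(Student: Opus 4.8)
The plan is to deduce the corollary directly from the preceding proposition (Proposition~5.3 of \cite{pin1995variety}) by transporting ordinary languages to $\Lambda$-languages along a fixed order-embedding of $\mathbb{B}$ into $\Lambda$. Since $\Lambda$ is assumed non-trivial, $\mathbb{0}\ne\mathbb{1}$, so the map $\iota_\Lambda:\mathbb{B}\to\Lambda$ defined by $\iota_\Lambda(0)=\mathbb{0}$ and $\iota_\Lambda(1)=\mathbb{1}$ is an \emph{order-embedding}: it is order-preserving and it also reflects the order, i.e.\ $\iota_\Lambda(a)\le\iota_\Lambda(b)$ holds iff $a\le b$ for $a,b\in\mathbb{B}$. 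A $\mathbb{B}$-language $L$ becomes the $\Lambda$-language $\iota_\Lambda\circ L$, and this is the sense of ``viewing a $\mathbb{B}$-language as a $\Lambda$-language'' used in the statement.

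First I would apply Proposition~5.3 of \cite{pin1995variety} to $(M,\le)\in\mathcal{M}$, obtaining ordinary languages $L_1,\dots,L_k$, each recognized by a finite ordered monoid in $\mathcal{M}$, with $(M,\le)$ a submonoid of $M_{L_1}\times\cdots\times M_{L_k}$, where $M_{L_i}$ is the syntactic ordered monoid of $L_i$ regarded as a $\mathbb{B}$-language. Next I would reinterpret each $L_i$ as the $\Lambda$-language $\widehat{L_i}=\iota_\Lambda\circ L_i:\Sigma_i^*\to\Lambda$ and check $\widehat{L_i}\in\mathbf{V}(\mathcal{M})$: if $L_i$ is recognized by $(N_i,\le_i)\in\mathcal{M}$ via a morphism $\eta_i$ and an op-coloring $P_i:N_i\to\mathbb{B}$, then $\iota_\Lambda\circ P_i:N_i\to\Lambda$ is again order-preserving, hence an op-coloring, and $\widehat{L_i}=(\iota_\Lambda\circ P_i)\circ\eta_i$, so $\widehat{L_i}$ is recognized by $(N_i,\le_i)\in\mathcal{M}$.

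It then remains to observe that the reinterpretation leaves syntactic ordered monoids unchanged, i.e.\ $M_{\widehat{L_i}}=M_{L_i}$. By definition $w\preceq_{\widehat{L_i}}w'$ holds iff $\iota_\Lambda(L_i(uwv))\le\iota_\Lambda(L_i(uw'v))$ for all $u,v$, and since $\iota_\Lambda$ reflects the order this is equivalent to $L_i(uwv)\le L_i(uw'v)$ for all $u,v$, i.e.\ to $w\preceq_{L_i}w'$; hence the syntactic preorders, congruences, and ordered monoids coincide. Therefore $(M,\le)$ is a submonoid of $M_{\widehat{L_1}}\times\cdots\times M_{\widehat{L_k}}$ with $\widehat{L_1},\dots,\widehat{L_k}\in\mathbf{V}(\mathcal{M})$, which is the claim. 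I do not expect a serious obstacle here: the only point needing genuine (if brief) care is the order-reflection property of $\iota_\Lambda$ and its role in identifying the syntactic ordered monoids; the rest is bookkeeping invoking Proposition~5.3, the definition of $\mathbf{V}(\mathcal{M})$, and, if one prefers to certify membership in $\mathbf{V}(\mathcal{M})$ through syntactic monoids, Theorem~\ref{th-minimality_of_synt} together with closure of $\mathcal{M}$ under divisors.
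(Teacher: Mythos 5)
Your proposal is correct and follows the same route the paper takes: the paper simply remarks that a $\mathbb{B}$-language can be considered as a $\Lambda$-language and leaves the corollary as immediate, while you spell out the embedding $\iota_\Lambda$, the transfer of recognition, and the preservation of syntactic ordered monoids via order-reflection. These verifications are exactly the implicit content of the paper's one-line justification, so there is nothing to add.
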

Then, we can prove the main theorem.

\begin{theorem}[The Variety Theorem] For a complete lattice $\Lambda$,
$\mathcal{V}$ and $\mathbf{V}$ are mutually inverse. That is, $\mathbf{V}\circ \mathcal{V}({\bf L})={\bf L}$ and $ \mathcal{V}\circ\mathbf{V}({\cal M})={\cal M}$ for a positive variety ${\bf L}$ of $\Lambda$-languages and a pseudo-variety ${\cal M}$ of ordered monoids.
\end{theorem}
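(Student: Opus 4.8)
The plan is to prove the two identities $\mathbf{V}\circ\mathcal{V}(\mathbf{L})=\mathbf{L}$ and $\mathcal{V}\circ\mathbf{V}(\mathcal{M})=\mathcal{M}$ separately, each by a pair of inclusions. At this point the argument is essentially bookkeeping on top of the results already established: the minimality of the syntactic ordered monoid (Theorem~\ref{th-minimality_of_synt}), closure of recognition under divisors (Theorem~\ref{th-basic_recognition}), the fact that $\mathbf{V}(\mathcal{M})$ is a positive variety (Theorem~\ref{th-basic_closure_properties}), Corollary~\ref{cor-subdirect_of_synt}, and above all the technical Lemma~\ref{lem-recog_by_synt}. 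I will also use the standard description of the pseudo-variety generated by a class $\mathcal{C}$ of finite ordered monoids: it consists exactly of the divisors of finite direct products of members of $\mathcal{C}$.

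For $\mathbf{V}\circ\mathcal{V}(\mathbf{L})=\mathbf{L}$: the inclusion $\mathbf{L}\subseteq\mathbf{V}(\mathcal{V}(\mathbf{L}))$ is immediate, since any $L\in\mathbf{L}$ is recognized by $(\eta_L, M_L, P_L)$ and $M_L\in\mathcal{V}(\mathbf{L})$ by definition. For the converse, I would take $L\in\mathbf{V}(\mathcal{V}(\mathbf{L}))$, so $L$ is recognized by some $M\in\mathcal{V}(\mathbf{L})$; by the description above, $M$ is a divisor of $M_{L_1}\times\cdots\times M_{L_k}$ for some $L_1,\dots,L_k\in\mathbf{L}$. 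Since $M$ recognizes $L$, Theorem~\ref{th-basic_recognition}(iii) shows $M_{L_1}\times\cdots\times M_{L_k}$ also recognizes $L$, and then Lemma~\ref{lem-recog_by_synt} applies directly to give $L\in\mathbf{L}$.

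For $\mathcal{V}\circ\mathbf{V}(\mathcal{M})=\mathcal{M}$: First, to see $\mathcal{V}(\mathbf{V}(\mathcal{M}))\subseteq\mathcal{M}$, it suffices to check that the syntactic ordered monoid of every $L\in\mathbf{V}(\mathcal{M})$ lies in $\mathcal{M}$, because $\mathcal{M}$ is a pseudo-variety and $\mathcal{V}(\mathbf{V}(\mathcal{M}))$ is the least pseudo-variety containing all such monoids; but $L\in\mathbf{V}(\mathcal{M})$ means $L$ is recognized by some $M\in\mathcal{M}$, so by Theorem~\ref{th-minimality_of_synt} $M_L$ is a divisor of $M$, hence $M_L\in\mathcal{M}$. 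Conversely, for $\mathcal{M}\subseteq\mathcal{V}(\mathbf{V}(\mathcal{M}))$, take $M\in\mathcal{M}$; by Corollary~\ref{cor-subdirect_of_synt} there are $\Lambda$-languages $L_1,\dots,L_k\in\mathbf{V}(\mathcal{M})$ with $M$ a submonoid of $M_{L_1}\times\cdots\times M_{L_k}$, each $M_{L_i}$ lies in $\mathcal{V}(\mathbf{V}(\mathcal{M}))$ by definition, and as a pseudo-variety $\mathcal{V}(\mathbf{V}(\mathcal{M}))$ is closed under finite products and under submonoids, so $M\in\mathcal{V}(\mathbf{V}(\mathcal{M}))$.

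The bulk of the real work — turning ``recognized by a product of syntactic monoids'' into membership in the positive variety, crucially using closure under $\Lambda$-morphisms — has already been carried out in Lemmas~\ref{lem-disjunct_1} and~\ref{lem-recog_by_synt}, so I expect the only remaining obstacle to be the routine verification of the generated-pseudo-variety description, i.e. that the divisors of finite products of members of $\mathcal{C}$ already form a pseudo-variety, which reduces to transitivity of division for \emph{ordered} monoids. The one point needing the order there is that a submonoid of a quotient can be pulled back along a surjective morphism to a quotient of a submonoid while respecting $\le$; this goes through by the same reasoning used for Proposition~\ref{prop-univ_free_2} and Theorem~\ref{th-basic_recognition}. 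Everything else is a direct appeal to the cited results.
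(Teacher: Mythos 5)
Your proposal is correct and follows essentially the same route as the paper: both directions of each identity are handled by exactly the same appeals to Lemma~\ref{lem-recog_by_synt}, Theorem~\ref{th-basic_recognition}, Theorem~\ref{th-minimality_of_synt}, and Corollary~\ref{cor-subdirect_of_synt}, with the generated pseudo-variety described as divisors of finite products of syntactic monoids. Your version is, if anything, slightly more explicit than the paper's about why containing the generating syntactic monoids suffices for $\mathcal{V}(\mathbf{V}(\mathcal{M}))\subseteq\mathcal{M}$ and about the transitivity of division, but the substance is identical.
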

\begin{proof}
    We first show that $\mathbf{V}\circ \mathcal{V}({\bf L})={\bf L}$.
    The direction ${\bf L}\subseteq \mathbf{V}\circ \mathcal{V}({\bf L})$ is clear because the syntactic ordered monoid of $L$ recognizes $L$. 
    Let $M\in {\mathcal V}({\bf L})$ be an ordered monoid. 
    Then, $M$ is a divisor of the direct product $N$ of syntactic monoids of languages in ${\bf L}$.
    By Lemma~\ref{lem-recog_by_synt}, every language recognized by $N$ is in ${\bf L}$.
    In addition, if a language $L$ is recognized by $M$, then $N$ also recognizes $L$ by Theorem~\ref{th-basic_recognition}. Thus, $L$ is in ${\bf L}$.
    This implies $\mathbf{V}\circ \mathcal{V}({\bf L})\subseteq {\bf L}$.
    
    Next, we show that $\mathcal{V}\circ\mathbf{V}({\cal M})={\cal M}$.
    The direction ${\cal M}\subseteq \mathcal{V}\circ\mathbf{V}({\cal M})$ is implied by Corollary~\ref{cor-subdirect_of_synt}.
    Let $M_L\in \mathcal{V}\circ\mathbf{V}({\cal M})$ be the syntactic ordered monoid of a $\Lambda$-language $L\in \mathbf{V}({\cal M})$.
    By the definition of $\mathbf{V}({\cal M})$, $L$ is recognized by an ordered monoid $M\in {\cal M}$.
    By Theorem~\ref{th-minimality_of_synt}, $M_L$ is a divisor of $M\in {\cal M}$, and therefore, $M_L\in {\cal M}$. This implies $\mathcal{V}\circ\mathbf{V}({\cal M})\subseteq {\cal M}$.\qed
\end{proof}
We have considered a \emph{local} version of the variety theorem for lattices, where a lattice $\Lambda$ is fixed in advance. However, since the definition of pseudo-varieties of ordered monoids does not depend on the choice of $\Lambda$, we can conclude that for any two complete lattices $\Lambda$ and $\Lambda'$, there is a one-to-one correspondence between positive varieties of $\Lambda$-languages and those of $\Lambda'$-languages. 
Based on this observation, it is plausible that a \emph{global} version of the variety theorem, where $\Lambda$ is not fixed, can also be established. 
This is similar to the usual variety theorem, where it does not matter which alphabet $\Sigma$ is used. 
Intuitively, as inverse homomorphisms allow us to freely change the alphabet from $\Sigma$ to $\Sigma'$, $\Lambda$-morphisms allow us to freely change the lattice from $\Lambda$ to $\Lambda'$.

As noted in \cite{pin1995variety}, a variety of languages is also a positive variety. 
Although we considered ordered monoids for technical reasons, familiar varieties, such as groups or aperiodic monoids, are also covered by our result.

\subsection{Connection with Languages as Subsets}
%In this subsection, let us mention the relationship with usual languages that are not lattice languages for the case where $\Lamnda$ is finite.
In this subsection, we examine how positive varieties of $\Lambda$-languages relate to positive varieties of ordinary (i.e., non-lattice) languages when $\Lambda$ is finite.
To distinguish languages in the sense of subsets from lattice languages, we will use the hat symbol such as $\hat{L}\subseteq \Sigma^*$.

For a $\Lambda$-language $L:\Sigma^*\to \Lambda$, define $L_{\le \lambda}\subseteq \Sigma^*$ as  $\hat L_{\le\lambda}=\{w\mid L(w)\le \lambda\}$ for each $\lambda\in \Lambda$.
Then, we show that an ordered monoid recognizing $\hat L_{\le\lambda}$ for each $\lambda\in \Lambda$ has sufficient power to recognize $L$.
\begin{theorem}\label{th-link_with_usual_languages}
Let $\Lambda$ be a finite lattice, and $L:\Sigma^*\to \Lambda$ be a $\Lambda$-language.
Also, let $M_\lambda$ be an ordered monoid recognizing $\hat L_{\le\lambda}=\{w\mid L(w)\le \lambda\}\subseteq \Sigma^*$ for each $\lambda\in \Lambda$.
Then, $L$ is recognized by $\prod_{\lambda\in \Lambda}M_\lambda$.
\end{theorem}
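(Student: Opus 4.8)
The plan is to assemble the recognizing data for all the subsets $\hat L_{\le\lambda}$ into a single triple living on the product monoid. For each $\lambda\in\Lambda$, fix a morphism $\eta_\lambda:(\Sigma^*,=)\to M_\lambda$ together with an order ideal $I_\lambda\subseteq M_\lambda$ (equivalently, the op-coloring $P_\lambda:M_\lambda\to\mathbb{B}$ with $I_\lambda=P_\lambda^{-1}(\mathbb{0})$; cf.\ Remark~\ref{rem-ideal}) such that $\hat L_{\le\lambda}=\eta_\lambda^{-1}(I_\lambda)$. I would then take $\eta:(\Sigma^*,=)\to\prod_{\lambda\in\Lambda}M_\lambda$ to be the product morphism $\eta(w)=(\eta_\lambda(w))_{\lambda\in\Lambda}$; this is a monoid morphism componentwise and is trivially order-preserving because the source carries the discrete order.

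The key step is the elementary lattice identity $L(w)=\bigwedge\{\lambda\in\Lambda\mid L(w)\le\lambda\}$, which holds because $L(w)$ is the least element of its own up-set (and $\Lambda$, being finite, is complete, so the meet — including the empty meet $\mathbb{1}$ — always exists). Since $L(w)\le\lambda$ iff $w\in\hat L_{\le\lambda}$ iff $\eta_\lambda(w)\in I_\lambda$, this motivates defining the coloring $P:\prod_{\lambda}M_\lambda\to\Lambda$ by
\[
P\bigl((m_\lambda)_{\lambda\in\Lambda}\bigr)=\bigwedge\{\lambda\in\Lambda\mid m_\lambda\in I_\lambda\}.
\]
With this definition, unwinding gives $P(\eta(w))=\bigwedge\{\lambda\mid \eta_\lambda(w)\in I_\lambda\}=\bigwedge\{\lambda\mid L(w)\le\lambda\}=L(w)$, so $P\circ\eta=L$.

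It remains to verify that $P$ is an op-coloring, i.e.\ order-preserving. Here I would use that each $I_\lambda$ is downward closed: if $(m_\lambda)_\lambda\le(m_\lambda')_\lambda$ in the product, i.e.\ $m_\lambda\le_\lambda m_\lambda'$ for every $\lambda$, then $m_\lambda'\in I_\lambda$ forces $m_\lambda\in I_\lambda$, hence $\{\lambda\mid m_\lambda'\in I_\lambda\}\subseteq\{\lambda\mid m_\lambda\in I_\lambda\}$; taking the meet over a larger index set yields a smaller element, so $P((m_\lambda)_\lambda)\le P((m_\lambda')_\lambda)$. Then $L$ is recognized by $(\eta,\prod_{\lambda\in\Lambda}M_\lambda,P)$.

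I do not anticipate a serious obstacle. The only points requiring care are the direction of the inequality when passing to a meet over a larger set — which is exactly why the downward-closedness of the $I_\lambda$ is needed and why the convention ``$\mathbb{0}$ = accepted'' is the correct one here — and the degenerate case $\{\lambda\mid m_\lambda\in I_\lambda\}=\emptyset$, which is covered by completeness of $\Lambda$ via $\bigwedge\emptyset=\mathbb{1}$.
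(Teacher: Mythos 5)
Your proposal is correct and takes essentially the same approach as the paper: the paper's coloring $P\bigl((x_\lambda)_\lambda\bigr)=\bigwedge_{\lambda}\bigl(P_\lambda(x_\lambda)\lor {\rm cons}(\lambda)\bigr)$ coincides with your $\bigwedge\{\lambda\mid m_\lambda\in I_\lambda\}$ (since each factor is $\lambda$ when $x_\lambda$ is accepted and $\mathbb{1}$ otherwise), and the verification $P\circ\eta=L$ via $L(w)=\bigwedge\{\lambda\mid L(w)\le\lambda\}$ is the same computation. Your explicit check that $P$ is order-preserving merely spells out a step the paper declares clear.
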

\begin{proof}
We consider $\hat{L}_{\le \lambda}$ as the $\Lambda$-language $L_{\le \lambda}:\Sigma^*\to \Lambda$ such that $L_{\le \lambda}(w)=\mathbb{0}$ if $w\in \hat{L}_{\le \lambda}$, and $L_{\le \lambda}(w)=\mathbb{1}$ otherwise.
Then, there exists a triple $(\eta_\lambda,M_\lambda,P_\lambda)$ recognizing $L_{\le \lambda}$ by the assumption.
Define the op-coloring $P:\prod_{\lambda}M_\lambda\to \Lambda$ as \[
P((x_\lambda)_\lambda)=\bigwedge_{\lambda}\bigl(P_\lambda(x)\lor {\rm cons}(\lambda)\bigr)
\]
for each $x=(x_\lambda)_\lambda\in \prod_{\lambda}M_\lambda$.
It is clear that $P$ is order-preserving.
Then, for each $w\in \Sigma^*$,
\begin{align*}
    \bigl(P\circ \langle \eta_\lambda\rangle_\lambda \bigr) (w)&=P((\eta_{\lambda}(w))_\lambda)\\
    &=\bigwedge_{\lambda}\bigl(P_\lambda(\eta_{\lambda}(w))\lor {\rm cons}(\lambda)\bigr)\\
    &=\bigwedge_{\lambda}\bigl(L_{\le \lambda}(w)\lor {\rm cons}(\lambda)\bigr)
\end{align*}
holds. In addition, for each $\lambda\in \Lambda$,
\[
L_{\le \lambda}(w)\lor {\rm cons}(\lambda)=\begin{cases}\lambda&\text{if }L(w)\le\lambda,
\\
\mathbb{1}&\text{otherwise~.}\end{cases}
\]
Thus, \[
\bigl(P\circ \langle \eta_\lambda\rangle_\lambda \bigr) (w)=\bigwedge\{\lambda\mid L(w)\le \lambda\}=L(w)
\]
holds for each $w\in\Sigma^*$, that is, $L=P\circ \langle \eta_\lambda\rangle_\lambda $. 
Therefore, $\prod_{\lambda\in  \Lambda}M_\lambda$ recognizes $L$.\qed
\end{proof}

In \cite{pin1995variety}, shuffle ideals are introduced as an example of a class of languages characterized by ordered monoids.
Here, let us discussed the lattice version of shuffle ideals and its characterization.
For a word $w=a_1\cdots a_n$ with $a_1,\ldots,a_n\in \Sigma$, $w$ is a \emph{subword} of a word $v$ if $v=v_0a_1v_1\cdots v_{n-1}a_nv_n$ for some words $v_0,\ldots,v_n\in\Sigma^*$.
We write $w\sqsubseteq v$ if $w\in\Sigma^*$ is a subword of $v\in\Sigma^*$.
For a lattice $\Lambda$, a $\Lambda$-language $L:\Sigma^*\to \Lambda$ is a \emph{shuffle ideal} if $w\sqsubseteq v$ implies $L(v)\le L(w)$ for each $w,v\in\Sigma^*$.
By Theorem~\ref{th-link_with_usual_languages}, we have the following as an extension of Theorem~6.4 in \cite{pin1995variety}.
\begin{proposition}\label{prop-shuffle_ideal}
    Let $\Lambda$ be a finite lattice. 
    A $\Lambda$-language $L:\Sigma^*\to \Lambda$ is a shuffle ideal if and only if it is recognized by a finite ordered monoid $M$ such that the identity element $1_M\in M$ is the greatest element.
\end{proposition}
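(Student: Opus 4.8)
The plan is to prove both directions via the connection with ordinary shuffle ideals established in Theorem~\ref{th-link_with_usual_languages}, together with the classical characterization of (ordinary) shuffle ideals by ordered monoids with $1_M$ as the top element (Theorem~6.4 in \cite{pin1995variety}). The bridge is the observation that $L:\Sigma^*\to\Lambda$ is a shuffle ideal as a $\Lambda$-language \emph{if and only if} every slice $\hat L_{\le\lambda}=\{w\mid L(w)\le\lambda\}$ is an ordinary shuffle ideal. Indeed, if $L$ is a $\Lambda$-shuffle ideal and $w\sqsubseteq v$ with $v\in\hat L_{\le\lambda}$, then $L(w)\ge L(v)$ is false in the wrong direction—wait, we need the correct direction: $w\sqsubseteq v$ gives $L(v)\le L(w)$, hence if $w\in \hat L_{\le\lambda}$, i.e. $L(w)\le\lambda$, then $L(v)\le L(w)\le\lambda$, so $v\in\hat L_{\le\lambda}$; thus $\hat L_{\le\lambda}$ is upward closed under $\sqsubseteq$, which (by Pin's convention that shuffle ideals are the upward-closed sets, matching his $\mathbb{0}=$`accepted') is exactly an ordinary shuffle ideal. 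Conversely, suppose every $\hat L_{\le\lambda}$ is an ordinary shuffle ideal and $w\sqsubseteq v$. Since $L(w)\in\Lambda$, we have $w\in\hat L_{\le L(w)}$, so by the shuffle-ideal property of that slice $v\in\hat L_{\le L(w)}$, i.e. $L(v)\le L(w)$. Hence $L$ is a $\Lambda$-shuffle ideal. I would state and prove this equivalence first as a small sublemma.

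For the ``only if'' direction of Proposition~\ref{prop-shuffle_ideal}, assume $L$ is a $\Lambda$-shuffle ideal. By the sublemma, each $\hat L_{\le\lambda}$ is an ordinary shuffle ideal, and since $L$ is regular and $\Lambda$ is finite, each $\hat L_{\le\lambda}$ is a regular language (it is recognized by the finite syntactic ordered monoid of $L$ together with the op-coloring $\iota[\cdot]$-type slice; more directly, $\hat L_{\le\lambda}=(P_L\circ\eta_L)^{-1}(\{\mu\mid\mu\le\lambda\})$ and the preimage of a downward-closed set under an op-coloring is an order ideal of the finite syntactic monoid). By Theorem~6.4 of \cite{pin1995variety}, each $\hat L_{\le\lambda}$ is recognized by a finite ordered monoid $M_\lambda$ whose identity is its greatest element. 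By Theorem~\ref{th-link_with_usual_languages}, $L$ is recognized by $\prod_{\lambda\in\Lambda}M_\lambda$, which is finite (finite product of finite monoids, as $\Lambda$ is finite) and whose identity $(1_{M_\lambda})_\lambda$ is the greatest element of the product, since the order on a product is componentwise and each $1_{M_\lambda}$ is greatest in its factor. This gives the required $M$.

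For the ``if'' direction, suppose $L=P\circ\eta$ with $\eta:(\Sigma^*,=)\to(M,\le)$, $P:M\to\Lambda$ an op-coloring, $M$ finite, and $1_M$ the greatest element of $M$. Take $w\sqsubseteq v$; I would show $\eta(v)\le\eta(w)$, from which $L(v)=P(\eta(v))\le P(\eta(w))=L(w)$ follows because $P$ is order-preserving. To see $\eta(v)\le\eta(w)$: writing $w=a_1\cdots a_n$ and $v=v_0a_1v_1\cdots v_{n-1}a_nv_n$, we have $\eta(v)=\eta(v_0)\eta(a_1)\eta(v_1)\cdots\eta(a_n)\eta(v_n)$, and since each $\eta(v_i)\le 1_M$ (as $1_M$ is the top element) and $\le$ is compatible with multiplication on both sides, we can replace each $\eta(v_i)$ by $1_M$ step by step, obtaining $\eta(v)\le \eta(a_1)\cdots\eta(a_n)=\eta(w)$. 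Hence $L$ is a $\Lambda$-shuffle ideal.

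I do not expect a serious obstacle here; the proof is essentially a routine assembly of Theorem~\ref{th-link_with_usual_languages}, Pin's Theorem~6.4, and the slice-wise reformulation of the shuffle-ideal condition. The one point requiring a little care is making sure the direction of the inequalities in the definition of shuffle ideal matches Pin's convention (recall the footnote: $\mathbb{0}$ is interpreted as `accepted'), so that a $\mathbb{B}$-valued $\Lambda$-shuffle ideal corresponds to an \emph{upward}-$\sqsubseteq$-closed subset, exactly Pin's notion; I would insert one sentence confirming this compatibility. A second minor point is the regularity of each slice $\hat L_{\le\lambda}$, which I would justify in one line from finiteness of the syntactic monoid of $L$ rather than invoking any new machinery.
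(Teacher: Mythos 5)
Your proposal is correct and follows essentially the same route as the paper's proof: reduce to the slices $\hat L_{\le\lambda}$, observe they are ordinary shuffle ideals, invoke Pin's Theorem~6.4 and Theorem~\ref{th-link_with_usual_languages} for one direction, and use $\eta(v_i)\le 1_M$ together with order-preservation of $P$ for the other. The extra care you take (the converse of the slice lemma, the convention check, regularity of the slices) is fine but not needed beyond what the paper already does.
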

\begin{proof}
    For each $\lambda\in \Lambda$, define $\hat{L}_{\le \lambda}\subseteq \Sigma^*$ as  $\hat L_{\le\lambda}=\{w\mid L(w)\le \lambda\}$. 
    Then, $\hat L_{\le\lambda}$ is the shuffle ideal in the sense of \cite{pin1995variety} because if $w\sqsubseteq v$ and $L_{\le\lambda}(w)\le \lambda$, then $L_{\le \lambda}(v)\le L_{\le \lambda}(w)\le \lambda$ and $v\in \hat L_{\le\lambda}$.
    Thus, by Theorem~6.4 of \cite{pin1995variety}, each $\hat L_{\le\lambda}$ is recognized by an ordered monoid $M_\lambda$ such that $1_{M_\lambda}\in M_\lambda$ is the greatest element.
    By Theorem~\ref{th-link_with_usual_languages}, $L$ is recognized by $M=\prod_{\lambda}M_\lambda$, which is an ordered monoid such that $1_M=(1_{M_\lambda})_\lambda\in M$ is the greatest element.

    Conversely, assume that $L$ is recognized a triple $(\eta,M,P)$ such that $1_M\in M$ is the greatest element.
    Then, for a word $w=a_1\cdots a_n$ with $a_1,\ldots,a_n\in\Sigma$ and words $v_0,\ldots,v_n\Sigma^*$,
    we have $L(w)=P\circ \eta(w) =P\circ \eta(a_1\cdots a_n)=P( \eta(a_1)\cdots \eta(a_n))
    =P(1_M\cdot \eta(a_1)\cdot 1_M\cdots 1_M\cdot \eta(a_n)\cdot 1_M)
    \ge P(\eta(v_0)\eta(a_1)\eta(v_1)\cdots\allowbreak \eta(v_{n-1})\eta(a_n)\eta(v_n))
    =L(v_0a_1v_1\cdots v_{n-1}a_nv_n)$ by $1_M\ge v_i$ for each $0\le i\le n$. 
    Thus, $w\sqsubseteq v$ implies $L(v)\le L(w)$.\qed
\end{proof}

Of course, the class of shuffle ideals discussed here is just an example of positive varieties of lattice languages. 
As in this example, (positive) varieties of usual languages and their characterization can be extended to the lattice version in an appropriate way.

\section{Applications to Markov Chains}
In this section, we briefly discuss an applied aspect of our framework.
Specifically, we demonstrate that lattice languages and ordered monoids can provide a helpful method to analyze finite-state Markov chains.
Let us consider a Markov chain with a state space $\Omega=\{p_1,\ldots,p_n\}$ and a transition matrix $\Pi\in [0,1]^{\Omega\times \Omega}$.
A subset $C$ of $\Omega$ is called a \emph{communicating class} if for any two states $s_1,s_2\in C$, there exists a path from $s_1$ to $s_2$ with a positive probability.
Also, a subset $C\subseteq \Omega$ is called an \emph{ergodic class} if it is communicating and closed, i.e., $\Pi(s,t)=0$ for each $s\in C$ and $t\in\Omega \setminus C$
(see Chapter II of \cite{kemeny1960finite} in detail).
For example, the Markov chain $\mathcal{C}$ shown in Figure~\ref{fig-MC} has two ergodic classes $C_1=\{s_{11},s_{12}\}$ and $C_2=\{s_{21},s_{22}\}$.
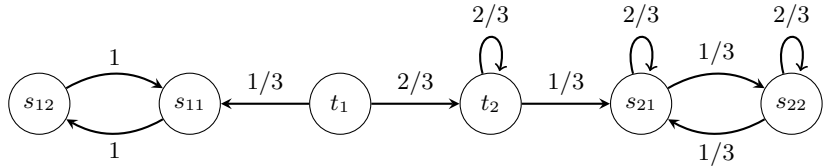
\begin{figure}[h]
      \centering
  \begin{tikzpicture}[on grid, node distance = 25pt,auto]
    \node [scale=1] (t1) [state] at (-2,0) {$t_1$};
    \node [scale=1] (s11) [state] at (-4,0) {$s_{11}$};
    \node [scale=1] (s12) [state] at (-6,0) {$s_{12}$};
    \node [scale=1] (t2) [state] at (0,0) {$t_2$};
    \node [scale=1] (s21) [state] at (2,0) {$s_{21}$};
    \node [scale=1] (s22) [state] at (4,0) {$s_{22}$};

    \path [-stealth, thick]
        (t1) edge node [above] {$1/3$} (s11)
        (s11) edge [bend left] node [below] {$1$} (s12)
        (s12) edge [bend left] node [above] {$1$} (s11)
        (t1) edge node [above]  {$2/3$} (t2)
        (t2) edge node [above] {$1/3$} (s21) 
        (s21) edge [bend left] node [above] {$1/3$} (s22)
        (s22) edge [bend left] node [below] {$1/3$} (s21)
      (t2) edge [loop above] node {$2/3$} ()
      (s21) edge [loop above] node {$2/3$} ()
      (s22) edge [loop above] node {$2/3$} ();
  \end{tikzpicture}
  \caption{Markov chain $\mathcal{C}$.}\label{fig-MC}
  \end{figure}
  
A Markov chain with a state space $\Omega$ is \emph{irreducible} if $\Omega$ itself is ergodic.
Because ergodic classes are convenient for analyzing Markov chains, reducible Markov chains are usually divided into ergodic classes and other components.
For example, the Markov chain $\mathcal{C}$ is divided into the class of transient states $\{t_1,t_2\}$ and ergodic classes $C_1,C_2$.

There have been studies that mention the relationship between Markov chains and formal languages such as \cite{etessami2009recursive,baier2023markov} and \cite{inoue2024semidirect}.
In particular, it is shown in \cite{rhodes2019unified} that every finite-state Markov chain has an action of a finite monoid on its state space that simulates the chain. 
Because an action of a finite monoid can be viewed as a DFA, every Markov chain can be converted into a DFA.
For example, $\mathcal{C}$ can be converted into the following DFA $\mathcal{A}$ with the alphabet $\Sigma=\{a,b,c\}$.
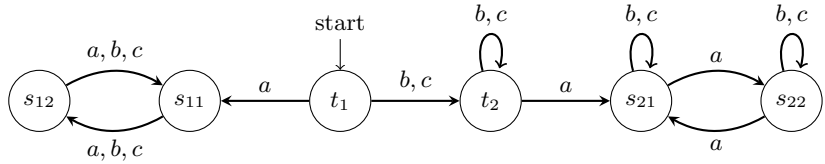
\begin{figure}[h]
      \centering
  \begin{tikzpicture}[on grid, node distance = 25pt,auto]
    \node [scale=1,initial above] (t1) [state] at (-2,0) {$t_1$};
    \node [scale=1] (s11) [state] at (-4,0) {$s_{11}$};
    \node [scale=1] (s12) [state] at (-6,0) {$s_{12}$};
    \node [scale=1] (t2) [state] at (0,0) {$t_2$};
    \node [scale=1] (s21) [state] at (2,0) {$s_{21}$};
    \node [scale=1] (s22) [state] at (4,0) {$s_{22}$};

    \path [-stealth, thick]
        (t1) edge node [above] {$a$} (s11)
        (s11) edge [bend left] node [below] {$a,b,c$} (s12)
        (s12) edge [bend left] node [above] {$a,b,c$} (s11)
        (t1) edge node [above]  {$b,c$} (t2)
        (t2) edge node [above] {$a$} (s21) 
        (s21) edge [bend left] node [above] {$a$} (s22)
        (s22) edge [bend left] node [below] {$a$} (s21)
      (t2) edge [loop above] node {$b,c$} ()
      (s21) edge [loop above] node {$b,c$} ()
      (s22) edge [loop above] node {$b,c$} ();
  \end{tikzpicture}
  \caption{DFA $\mathcal{A}$ simulating $\mathcal{C}$.}
  \end{figure}
  
\noindent
In this context, it is natural to ask the \emph{problem of ergodic classes}: Which ergodic class does each word reach?
To consider this, let us use lattice languages.
First, let $\Lambda=\mathcal{P}(\{1,2\})$ be the lattice of the powerset of $\{1,2\}$. 
That is, $\emptyset=\mathbb{0}$ is the least element, $\{1,2\}=\mathbb{1}$ is the greatest element, and $\{1\}$ and $\{2\}$ are incomparable. 
Next, let $F:\Omega\to \Lambda$ be the coloring on the DFA $\mathcal{A}$ defined as\[
F(s)=\begin{cases}
    \{1\}&\text{ if }s\in C_1=\{s_{11},s_{12}\},\\
    \{2\}&\text{ if }s\in C_2=\{s_{21},s_{22}\},\\
    \{1,2\}&\text{ otherwise}.
\end{cases}
\]
The $\Lambda$-language $L:\Sigma^*\to \Lambda$ is defined in the usual way, e.g., $L(ab)=\{1\}$ and $L(bbc)=\{1,2\}$.
Then, of course, $L^{-1}(\{i\})=\{w\in\Sigma^*\mid w\text{ 
reaches a state in }C_i\}$ holds for each $i\in\{0,1\}$.
In addition, the probability that a randomly chosen word reaches an ergodic class $C_i$ can be computed as the \emph{probability} of the language $L^{-1}(\{i\})$ (see e.g. Chapter III of \cite{salomaa2012automata} in detail).

More importantly, it is easily shown that this $\Lambda$-language $L$ is a shuffle ideal (see Section~3.4).
Therefore, we immediately obtain that the syntactic ordered monoid $M_L$ of $L$ is an ordered monoid in which the identity element is the greatest element by Proposition~\ref{prop-shuffle_ideal}.
In general, any language calculating the problem of ergodic classes is a shuffle ideal.
Consequently, various results from language theory can be applied to these languages and ordered monoids.
For example, because shuffle ideals are star-free (e.g., Chapter VII of \cite{pin2010mfat}), syntactic ordered monoids are always aperiodic by Schützenberger's theorem \cite{schutzenberger1965finite}.
This is also consistent with the fact that a random walk on a group cannot be equivalent to any reducible Markov chain.\footnote{Groups and aperiodic monoids are orthogonal: only the trivial monoid is both invertible and aperiodic.}

In this section, for simplicity, we have only discussed ergodic classes. 
However, more generally, transient states and its hierarchy can be effectively described using shuffle ideals. For instance, in $\mathcal{C}$, even if we modify the lattice $\Lambda$ and coloring $F$ so that $F(t_2)\lneq F(t_1)\in\Lambda$ by reflecting the fact that $t_2$ cannot return to $t_1$, $L$ is still a shuffle ideal.
Further discussion is omitted due to space limitations, but the authors believe that various other discussions in Markov chains can be effectively described using lattice languages and ordered monoids.

\section{Conclusion}
In this paper, we proved a natural one-to-one correspondence between positive varieties of lattice languages and pseudo-varieties of ordered monoids, as an extension of the result in \cite{pin1995variety}.
Additionally, we outlined how our framework can be applied to the algebraic approach to Markov chains.
Here, we explain two directions for future work.

One of the future directions is to engage in a deeper discussion regarding the application to Markov chains. 
The description of reducibility and the relationship with probabilities in language theory were briefly discussed in Section~4. 
Additionally, there is an important unsolved problem known as the \emph{cutoff phenomenon} \cite{diaconis1981generating,aldous1986shuffling}.
The cutoff phenomenon refers to the situation where a Markov chain quickly approaches its stationary distribution after a specific point in time. 
This is empirically known to occur in Markov chains with certain symmetries, such as random walks on groups. 
Therefore, it is possible that our approach, based on pseudo-varieties, could provide insights into this problem. 
%Moreover, there are many future studies on the relationship between Markov chains and our framework.

In addition, there are subsequent studies of \cite{pin1995variety} such as Pol{\'a}k's correspondence between conjunctive varieties of languages and pseudo-varieties of idempotent semirings \cite{Polak}, and Ballester-Bolinches's correspondence between formations of languages and formations of monoids \cite{BallesterBolinches}.
It is an interesting question whether these variants can also be extended to lattice versions. 
%In particular, the problem of ergodic classis discussed in Section~4 seems to be better described using semilattices rather than lattices.
\newpage
\bibliography{bibtex}

\end{document}